\theoremstyle{plain}
\newtheorem{theorem}{Theorem}[section]
\newtheorem{proposition}[theorem]{Proposition}
\newtheorem{lemma}[theorem]{Lemma}
\theoremstyle{definition}
\newtheorem{example}{Example}
\theoremstyle{remark}
\pgfplotsset{compat=1.14}
\begin{document}

\title{Fair Indivisible Payoffs through Shapley Value}

\author[1]{Miko{\l}aj Czarnecki}
\author[1,2]{Micha{\l} Korniak}
\author[1]{Oskar Skibski}
\author[1]{Piotr~Skowron}

\affil[1]{\normalsize University of Warsaw, Banacha 2, 02-097 Warszawa, Poland}
\affil[2]{\normalsize ETH Z\"urich, Rämistrasse 101, 8092 Z\"urich, Switzerland}
\affil[ ]{\textit{mc448206@students.mimuw.edu.pl},\break
\textit{mkorniak@ethz.ch},\break
\textit{\{oskar.skibski, p.skowron\}@mimuw.edu.pl}}

\maketitle

\begin{abstract}
We consider the problem of payoff division in \emph{indivisible coalitional games}, where the value of the grand coalition is a natural number.
This number represents a certain quantity of indivisible objects, such as parliamentary seats, kidney exchanges, or top features contributing to the outcome of a machine learning model.
The goal of this paper is to propose a fair method for dividing these objects among players.
To achieve this, we define the \emph{indivisible Shapley value} and study its properties.
We demonstrate our proposed technique using three case studies, in particular, we use it to identify key regions of an image in the context of an image classification task.
\end{abstract}

\section{Introduction}\label{introduction}

We consider a model in which a group of players seeks to split a fixed number of indivisible objects among themselves. 
For each subgroup of players, hereinafter called a \emph{coalition}, the exact or estimated number of objects they are entitled to is known. 
The goal of this paper is to develop a method for identifying a fair division while accounting for dependencies between players.

Our setting models various scenarios. E.g., in kidney exchange programs~\cite{Agarwal:etal:2019, Chandra:Skinner:2012}, hospitals can be considered as players, with coalitions' values representing the number of kidney exchanges they can perform through collaboration. The outcome of our algorithms can be interpreted as a recommended distribution of donors across different hospitals for a single round of the allocation process.

Similarly, our algorithms can be used as apportionment methods for elections, where voters are allowed to approve multiple political parties~\cite{Brill:etal:2024, Brams:etal:2019}. In this context, a coalition's value would represent the number of seats it is entitled to, calculated as the percentage of votes cast on any subset of parties from the group, multiplied by the total number of seats. Our algorithms then determine a seat distribution among parties that can be considered fair both for voters, and for the participating parties.

Yet another application involves selecting the most important features contributing to the outcome of a machine learning model~\cite{Lundberg:Lee:2017,KommiyaMothilal:etal:2021}. 
For example, consider an image classification task where an image is divided into multiple small regions that represent features for the model. 
By evaluating all subsets of features, one can assess the contribution of each feature to the final outcome. 
Our method identifies a few key regions that represent the distribution of importance across the entire image.  

We frame our problem as a coalitional game where the value of the coalition of all players is a natural number. 
Depending on the application, the values of individual coalitions may or may not be integral. 
To achieve a fair division, we introduce the \emph{indivisible Shapley value}---
an extension of the classic Shapley value that returns a vector of integer payoffs. 
Our method accounts for dependencies between players, and so prevents the overrepresentation of players that share common resources.  

To illustrate the idea of fairness, consider an apportionment scenario with three right-wing parties, $A$, $B$, and $C$, and two left-wing parties, $D$ and $E$.  Suppose there are three committee seats to distribute. The three right-wing parties are approved by 66\% of voters and thus entitled to two out of three seats, while the two left-wing parties are approved by 33\% of voters and entitled to one seat. 
Clearly, each right-wing party is entitled to 2/3 of a seat, while $D$ and $E$ are entitled to half a seat, each.
If seats were assigned by simply rounding up the largest shares, all three seats would go to the right-wing parties, leaving 33\% of the electorate underrepresented.
Our method, however, would allocate two seats randomly among the right-wing parties and one seat to a left-wing party, resulting in an arguably more balanced and representative outcome. 

In coalitional games, the concept of a justified payoff for all groups is captured by the \emph{core}. 
We show that in convex indivisible games with integer coalition values---where it is known that the Shapley value lies in the core---the indivisible Shapley value also belongs to the core. 
However, we demonstrate that such guarantees do not extend to general fractional or non-convex games.

In our work we chose to focus on a deterministic algorithm. It yields desirable outcomes in settings such as elections where transparency and predictability are desired. Likewise, in machine-learning context determinism guarantees reproducibility and consistency. 

Additionally, we consider a setting where objects have the same value but are non-identical. 
We develop an algorithm to find an allocation corresponding to the indivisible Shapley value. 
Finally, we demonstrate our approach using three case studies: applying it as an apportionment method in approval elections, selecting key regions in an image classification task, and as a tool for a fair division of seats among political parties forming a coalition.

\textbf{Related Work}
The fair allocation of indivisible goods has been studied extensively in the literature. Most previous works conceptualize fairness through envy-freeness~\cite{Maskin1987, Segalevi2019}, emphasizing individual-level comparisons. In contrast, we study fairness within the framework of coalitional games, which focuses on relationships and stability among groups of agents rather than individuals.

Coalitional games are often defined using integer representations, for example in weighted voting games~\cite{Chalkiadakis:etal:2011,Shapley:Shubik:1954}, matching games~\cite{Benedek:etal:2023}, and allocation games~\cite{greco2015}. These formulations frequently employ integer parameters such as weights or capacities. However, the players’ payoffs or marginal contributions are not restricted to integer values, and the resulting Shapley values are typically non-integral. Our work differs in that we focus on coalitional games where both the representation and the resulting payoffs are explicitly integer-valued.

Indivisible coalitional games can be viewed as a subset of Non-Transferable Utility (NTU) games~\cite{mclean2002ntu}. However, it remains unclear whether previously proposed solution concepts~\cite{lejano2011} yield fair or meaningful outcomes when applied specifically to indivisible games with integer payoffs. In particular, the Shapley NTU value, as introduced by McLean~\cite{mclean2002ntu}, is undefined in this setting, since no value exists that satisfies the Unanimity axiom (i.e., equally dividing the payoff in a unanimity game).


\section{Preliminaries}\label{section:preliminiaries}
A (coalitional) \emph{game} is a pair $(N,v)$ such that $N$ is the set of $n$ players and $v: 2^N \rightarrow \mathbb{R}$ is a function that assigns a real value to every coalition $S \subseteq N$ with $v(\emptyset) = 0$.

A \emph{value} of a game is a function $\varphi$ that in every game $(N,v)$ assigns a payoff vector $x \in \mathbb{R}^N$; the payoff of player $i$ according to $\varphi$ in game $(N,v)$ is denoted by $\varphi_i(N,v)$.
In this paper, we will focus on the \emph{Shapley value}~\cite{Shapley:1953}.
Let $\pi$ be an arbitrary permutation from $\Omega(N)$, the set of all permutations of $N$, and let $S^{\pi}_i = \{j \in N : \pi(j) < \pi(i)\}$ be the set of players that appear before $i$ in $\pi$.
The Shapley value of player $i$ in game $(N,v)$ is equal to the average marginal contribution to all permutations:
\[ SV_i(N,v) = \frac{1}{|N|!} \sum_{\pi \in \Omega(N)} (v(S^{\pi}_i \cup \{i\}) - v(S^{\pi}_i)). \]

\citet{Shapley:1953} in his classic paper characterized the Shapley value with four axioms that capture the fairness of the payoff division.
We list them here, as they will play a crucial role both in the analysis and the proofs.

\begin{description}[itemsep=1pt,parsep=1pt,topsep=1pt]
\item[Efficiency:] $\sum_{i \in N} \varphi_i(N,v) \!=\! v(N)$ for every game $(N,v)$
\item[Null-player:] $\varphi_i(N,v) = 0$ for every game $(N,v)$ and player $i \in N$ which is a null-player, i.e., $v(S \cup \{i\}) = v(S)$ for every $S \subseteq N \setminus \{i\}$
\item[Additivity:] $\varphi_i(N,v+v') = \varphi_i(N,v) + \varphi_i(N,v')$ for every games $(N,v), (N,v')$ and $i \in N$, where $(v+v')(S) = v(S)+v'(S)$ for every $S \subseteq N$
\item[Symmetry:] $\varphi_i(N,v) = \varphi_j(N,v)$ for every game $(N,v)$ and players $i,j \in N$ such that $v(S \cup \{i\}) = v(S \cup \{j\})$ for every $S \subseteq N \setminus \{i,j\}$
\end{description}

Efficiency states that payoffs of all players sum up to $v(N)$.
Null-player implies that a null-player get zero payoff.
Additivity means the sum of payoffs in two games is equal to the payoff in the game obtained by summing both games.
Symmetry states symmetric players gets equal payoffs. 

Among many characterizations of the Shapley value, one of the most important was proposed by \citet{Harsanyi:1963}.
The \emph{Harsanyi dividend} of coalition $S$ in game $(N,v)$ is the surplus created by a coalition $S$. 
It is defined as the value of the coalition decreased by surpluses of subcoalitions:
\[ \Delta_v(S) = v(S) - \sum_{T \subsetneq S} \Delta_v(T) \quad \text{ if }|S|>1, \]
with $\Delta_v(S) = v(S)$ for $|S|=1$.
For a coalition $S \subseteq N$, let us define an \emph{unanimity game} $u_S$ in which all supersets of $S$ have value $1$, and the remaining coalitions have value $0$: $u_S(T) = 1$ if $S \subseteq T$ and $u_S(T) = 0$, otherwise.
Now, it holds that every game can be represented as:
$v = \sum_{S \subseteq N} u_S \Delta_v(S)$.
From Efficiency, Symmetry and Null-player it is easy to observe that the Shapley values of all players from $S$ in $u_S$ equal $1/|S|$ and of others are zero.
Hence, from Additivity, we get that the Shapley value simply shares the dividend of each coalition equally among its members:
\[ SV_i(N,v) = \sum_{S \subseteq N, i \in S} \Delta_v(S)/|S|. \]

The Shapley value focuses on the fairness of the division. 
In turn, the \emph{core} is a set of \emph{stable} payoff vectors:
\[ \mathop{Core}(N,v) = \{x : \sum_{i \in N} x_i = v(N), \forall_{S \subseteq N} \sum_{i \in S} x_i \ge v(S)\}. \]
Hence, the payoff vector is in the core if the payoff of every coalition is larger or equal to its value.

In general, the core can be empty. 
However, it is known that the core is not empty if the game is \emph{convex}.
A game is \emph{convex} if it holds
$v(S \cup \{i\}) - v(S) \le v(T \cup \{i\}) - v(T)$
for every $i \in N, S \subseteq T \subseteq N \setminus \{i\}$.
Roughly speaking, game is convex if the marginal contribution of a player to a coalition increases (does not decrease) with the coalition size.
If a game is convex, then the Shapley value belongs to the core.
A subclass of convex games is formed by \emph{positive} games.
A game is \emph{positive} if all its Harsanyi dividends are non-negative: $\Delta_v(S) \ge 0$ for all $S \subseteq N$.

\section{Indivisible Coalitional Games}\label{section:model}

In this paper, we focus on a new class of coalitional games, named \emph{indivisible (coalitional) games}. 
In such games, the value of the grand coalition represents the number of indivisible identical items (e.g., seats in parliament, kidney transplants). 
Formally, an \emph{indivisible coalitional game} is a coalitional game $(N,v)$ with $v(N) \in \mathbb{N}$. 

As we argued in the introduction, depending on the application, values of all coalitions may or may not be integers as well.
In the former case, i.e., if $v(S) \in \mathbb{Z}$ for every $S \subseteq N$, we will call such games \emph{integer indivisible games}.
In the later case, we will call such games \emph{fractional indivisible games}.

The value of an indivisible coalitional game is an \emph{indivisible payoff vector} $x \in \mathbb{Z}^N$ that assigns each player an integer.
The goal of this paper is to extend the Shapley value to indivisible coalitional games.

From Shapley's axioms, we cannot demand Symmetry and Additivity.
This is clear by looking at game $u_{\{1,2\}}$ in which two players jointly create a value of one, but only one of them can receive payoff $1$. 
If we assume that player $1$ receives this payoff, then Additivity would imply that in game $2 u_{\{1,2\}}$ player $1$ will receive the payoff of $2$, which is clearly undesirable.

The following two properties captures the idea we want the value to be as closed to the Shapley value as possible.
Specifically, the Lower Quota states that the payoff of a player is no less than the floor of its Shapley value and the Upper Quota---that the payoff is no greater than the ceiling.
\begin{description}
\item[Lower Quota:] $\varphi_i(N,v) \ge \lfloor SV_i(N,v) \rfloor$ for every game $(N,v)$ and player $i \in N$.
\item[Upper Quota:] $\varphi_i(N,v) \le \lceil SV_i(N,v) \rceil$ for every game $(N,v)$ and player $i \in N$.
\end{description}
Note that the Lower Quota and the Upper Quota imply Null-player.

We define \emph{indivisible core} as the set of indivisible payoff vectors that belong to the core.
We note that the indivisible core might be empty even if the standard core is not; in fact, it might be empty even for integer indivisible games.

\begin{example}
\label{example:half-game}
\emph{(Half-game)}
Consider an integer indivisible game with $N = \{1,2,3,4\}$ and $v(S) = \lfloor |S|/2 \rfloor$.
Note that $v(N) = 2$.
Fix an arbitrary payoff vector $x = [x_1,\dots,x_4] \in \mathbb{R}^N$ and take two players with the smallest payoffs: $i,j \in N$.
Since $v(\{i,j\}) = 1$ we know that $x_i+x_j \ge 1$. 
This is only possible if $x = [0.5, 0.5, 0.5, 0.5]$.
Thus, the core consists of only this one payoff vector and the indivisible core is empty.
\hfill $\lrcorner$
\end{example}

\section{Indivisible Shapley Value for Convex Integer Games}\label{section:convex}

We begin our analysis from convex integer games. 
As we already mentioned, we know that the core of a convex game is non-empty and that the Shapley value belongs to it \cite{Shapley:1971}.
However, as evidence by \cref{example:half-game}, this does not imply that any indivisible payoff vector is in the core.


In this section, we show that there exists an indivisible payoff vector that satisfies Lower Quota and Upper Quota and at the same time belongs to the core and we define a method that finds such a vector.

\subsection{Games with Shapley Values from the Unit Interval}

Let us first focus on games in which Shapley values of all players belong to $(0,1)$ interval.
From Efficiency, this implies that the value of the grand coalition is lower than the number of players: $v(N) < |N|$.
Hence, to satisfy both Lower and Upper Quota, we need to select a subset of players that will receive value $1$ and give $0$ to the remaining players.

At first, it seems natural to select players with the highest Shapley values. 
However, as we now argue, this is not a desirable approach.

\begin{example}\label{example:2u123u45}
Consider five players, three of which are responsible for creating two goods, and the remaining two -- for another good. 
This corresponds to a game $2u_{123}+u_{45}$.
The Shapley value of players $1$, $2$ and $3$ equals $0.\bar{6}$ and of players $4$ and $5$ -- only $0.5$.
Hence, by choosing players with the highest Shapley values we would select players $1$, $2$ and $3$. 
This is, however, unfair to players $4$ and $5$ who are responsible for one good: the payoff vector $(1,1,1,0,0)$ is not in the core, as $x_4+x_5 < v(\{4,5\})$.
\hfill $\lrcorner$
\end{example}

Let $(N,v)$ be any integer indivisible game such that $SV_i(N,v) \in (0,1)$ for every $i \in N$.
We already argued that $v(N) < |N|$.
From the fact that the Shapley value belongs to the core $v(S) \le \sum_{i \in S} SV_i(N,v)$, we get that the same bound $v(S) < |S|$ holds for all coalitions $S \subseteq N$.
In such a case, we will say that a game is \emph{size-bounded}.

Our construction of the solution will be based on the notion of \emph{reduced game}~\cite{Davis:Maschler:1965}.
Let $(N,v)$ be an arbitrary game and $i \in N$ be a player.
For an arbitrary payoff $c \in \mathbb{R}_{\ge 0}$, the \emph{$c$-reduced game} is $(N \setminus \{i\}, \Psi^{i\rightarrow c}_v)$, where $\Psi^{i\rightarrow c}_v$ is defined as follows:
\[ \Psi^{i\rightarrow c}_{v}(S) = \begin{cases}
v(N)-c & \mbox{if } S = N \setminus \{i\}, \\
0 & \mbox{if } S = \emptyset, \\
\max\{v(S \cup \{i\}) - c,v(S)\} & \mbox{otherwise.} 
\end{cases}\]
The reduced game imitates giving player $i$ the payoff $c$ and removing it from the game. 
Specifically, the value of the grand coalition in the resulting game is reduced by $c$ and, as always, the value of an empty coalition is zero.
For all other coalitions we take the maximum value from the value of a coalition without player $i$ and the value of a coalition with player $i$ reduced by its payoff, $c$.

\begin{example} 
Consider the game $v = 2u_{123}+u_{45}$ from \Cref{example:2u123u45} and assume that we assign payoff $1$ to player $1$.
The reduced game equals $\Psi^{1 \rightarrow 1}_v = u_{23}+u_{45}$.
This is because one good created by a large group has already been allocated.
If we allocate the second good to player $2$ and reduce the game, we will get $\Psi^{2 \rightarrow 1}_{\Psi^{1 \rightarrow 1}_v} = u_{45}$.
Hence, players 4 and 5 would share the last good.
\hfill $\lrcorner$
\end{example}

We begin by showing how the properties of a game translate to the properties of the reduced game.
Specifically, if game is convex and $c$ is between the marginal contributions to the largest and smallest coalitions, the reduced game will also be convex.
Also, if the game is size-bounded, it will remain size-bounded if $c \ge 1$.

\begin{lemma}
\label{lemma:reduced_game}
For every integer indivisible game $(N,v)$, player $i \in N$ and $c \in \mathbb{N}$, the $c$-reduced game $(N \setminus \{i\}, \Psi^{i \rightarrow c}_{v})$ is an integer indivisible game which is
\begin{itemize}[itemsep=1pt,parsep=1pt,topsep=1pt]
\item convex if $(N,v)$ is convex and $v(\{i\}) \le c \le v(N)-v(N \setminus \{i\})$;
\item size-bounded if $(N,v)$ is size-bounded and $c \ge 1$.
\end{itemize}
\end{lemma}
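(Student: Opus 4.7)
The plan is to dispatch the three claims---integer-indivisibility, convexity, and size-boundedness---as separate verifications, with essentially all substantive work concentrated in the convexity step.

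Integer-indivisibility is immediate: $\Psi^{i\rightarrow c}_{v}$ is built from $v(\cdot) \in \mathbb{Z}$ and $c \in \mathbb{N}$ using only subtraction and $\max$, and the grand coalition $N\setminus\{i\}$ receives the integer value $v(N)-c$. The key preparatory observation for the remaining two claims is that, under the stated hypotheses, the piecewise definition of $\Psi^{i\rightarrow c}_{v}$ collapses into the single formula $\Psi(S) = \max\{v(S\cup\{i\}) - c,\, v(S)\}$ valid for \emph{every} $S \subseteq N\setminus\{i\}$. Indeed, at $S = \emptyset$ the bound $v(\{i\}) \le c$ forces that max to equal $0$, and at $S = N\setminus\{i\}$ the bound $c \le v(N) - v(N\setminus\{i\})$ forces it to equal $v(N) - c$.

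For convexity I would work with the equivalent supermodular form $\Psi(A) + \Psi(B) \le \Psi(A\cup B) + \Psi(A\cap B)$. Abbreviate $a(R) = v(R\cup\{i\}) - c$ and $b(R) = v(R)$, so that $\Psi(R) = \max\{a(R), b(R)\}$, and split into four cases according to which branch attains the max at $A$ and at $B$. In each case supermodularity of $v$ applied to a suitable pair---$(A,B)$ for $bb$, $(A\cup\{i\}, B\cup\{i\})$ for $aa$, and the mixed $(A\cup\{i\}, B)$ for $ab$---yields an upper bound of shape $a(A\cup B)+a(A\cap B)$, $b(A\cup B)+b(A\cap B)$, or $a(A\cup B)+b(A\cap B)$, each of which is at most $\Psi(A\cup B)+\Psi(A\cap B)$ by the trivial inequalities $a,b \le \Psi$. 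The mixed case crucially uses $i \notin A\cup B$, so that $(A\cup\{i\}) \cap B = A\cap B$ and $(A\cup\{i\}) \cup B = (A\cup B) \cup \{i\}$, lining the shifted sets up exactly with the targets on the right-hand side.

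For size-boundedness, I would use that $v$ being integer-valued sharpens $v(S) < |S|$ to $v(S) \le |S|-1$ for nonempty $S$. At $S = N\setminus\{i\}$ this gives $\Psi(N\setminus\{i\}) = v(N) - c \le |N| - 1 - c \le |N|-2 < |N\setminus\{i\}|$ using $c \ge 1$. For any other nonempty $S \subsetneq N\setminus\{i\}$ the two branches of the max are both strictly below $|S|$: $v(S) \le |S|-1$ directly, and $v(S\cup\{i\}) - c \le |S| - c \le |S|-1$. The main obstacle is the convexity step, since the pointwise $\max$ of two supermodular functions is not supermodular in general. The argument only goes through because one of the two arguments of the max, $v(\cdot\cup\{i\})$, is merely a ``shift by $\{i\}$'' of the other along the axis of the removed player, so that applying supermodularity of $v$ to the $\{i\}$-shifted pairs lands exactly on $A\cup B$ and $A\cap B$ rather than on some larger sets.
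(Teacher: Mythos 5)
Your proof is correct, and the convexity step takes a genuinely different route from the paper's. The paper works directly with the marginal-contribution form of convexity: it first shows that $\Psi^{i\rightarrow c}_v(R)$ equals $v(R\cup\{i\})-c$ exactly when $v(R\cup\{i\})-v(R)>c$, observes that (by convexity of $v$) the set of coalitions on which this branch is active is upward-closed, and then checks the inequality $\Psi(S\cup\{j\})-\Psi(S)\le\Psi(T\cup\{j\})-\Psi(T)$ in the five resulting configurations of active branches among $S$, $S\cup\{j\}$, $T$, $T\cup\{j\}$. You instead pass to the equivalent supermodular form and case-split only on which branch attains the max at $A$ and at $B$; each case is then a single application of supermodularity of $v$ to an $\{i\}$-shifted pair followed by the trivial bounds $a,b\le\Psi$ at $A\cup B$ and $A\cap B$. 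This buys you a shorter argument (three cases up to symmetry, no need for the upward-closedness observation), at the cost of invoking the standard equivalence between increasing marginal contributions and supermodularity, which is worth a one-line justification or citation. Your boundary analysis matches the paper's: the hypotheses $v(\{i\})\le c\le v(N)-v(N\setminus\{i\})$ are exactly what make the $\emptyset$ and $N\setminus\{i\}$ clauses of the definition agree with the generic max formula, which is what both proofs need so that supermodularity can be checked uniformly. One small imprecision: you describe this collapse as preparatory to both remaining claims, but those hypotheses are not assumed in the size-boundedness part; this is harmless since your size-boundedness argument in fact treats $N\setminus\{i\}$ via the definitional value $v(N)-c$ and never uses the collapse, mirroring the paper's computation.
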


This result implies that if player has a positive marginal contribution to the grand coalition (which, for integers, means equal at least $1$) in a convex, size-bounded game, the $1$-reduced game will remain convex and size-bounded.

Consider now a payoff vector which belongs to the core of the reduced game.
Observe that the value of each coalition is at most $c$ lower than the value of this coalition with player~$i$ added in the original game.
Hence, if we add player~$i$ with value $c$ to this payoff vector, then it will be in the core of the original game.
This leads to the following result.

\begin{lemma}
\label{lemma:in-core-iff-marg-1}
For every convex and size-bounded integer game $(N,v)$ and player $i \in N$, there exists a 0-1 payoff vector $x \in \{0,1\}^N$ in the core such that $x_i=1$ if and only if $v(N) - v(N \setminus \{i\}) \ge 1$. 
\end{lemma}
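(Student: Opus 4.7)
The plan is to prove the two directions separately. The necessity direction is essentially one line: if $x \in \{0,1\}^N$ is in the core of $(N,v)$ with $x_i = 1$, then efficiency gives $\sum_{j \in N} x_j = v(N)$ while the core inequality on $S = N \setminus \{i\}$ gives $\sum_{j \neq i} x_j \ge v(N \setminus \{i\})$, so subtracting yields $v(N) - v(N \setminus \{i\}) \ge x_i = 1$.

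For the sufficiency direction I would prove, by induction on $|N|$, the slightly stronger statement: every convex, size-bounded, integer indivisible game $(N,v)$ admits some $0$-$1$ vector in its core, and whenever a player $i$ satisfies $v(N) - v(N \setminus \{i\}) \ge 1$, one such vector may be chosen with $x_i = 1$. The base case $|N|=1$ is immediate since size-boundedness with integrality forces $v(\{1\}) \le 0$, so $x = (0)$ is in the core and no index meets the marginal condition. For the inductive step on the ``moreover'' clause, given $i$ with $v(N) - v(N \setminus \{i\}) \ge 1$, apply \cref{lemma:reduced_game} with $c = 1$: the hypotheses $v(\{i\}) \le 1$ (from size-boundedness plus integrality) and $1 \le v(N) - v(N \setminus \{i\})$ are met, so the $1$-reduced game $(N \setminus \{i\}, \Psi^{i \to 1}_v)$ is again convex, size-bounded and integer indivisible. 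By the induction hypothesis it carries a $0$-$1$ core vector $y$; extending $y$ by $x_i = 1$ yields the candidate $x$, and a short case analysis on a coalition $S$ (splitting on whether $i \in S$ and on which branch of the max in $\Psi^{i \to 1}_v$ is active) confirms $x$ lies in the core of $(N,v)$; in particular, the efficiency check $\sum_j x_j = 1 + \Psi^{i \to 1}_v(N \setminus \{i\}) = v(N)$ and the inequality for $S = N \setminus \{i\}$ use exactly $v(N) - v(N \setminus \{i\}) \ge 1$.

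It remains to prove the unconditional existence part of the strengthened statement, which is needed to recurse. If $v(N) = 0$, then convexity together with nonnegativity of coalition values (which we may assume in the indivisible-items setting) forces $v \equiv 0$ via $v(S) + v(N \setminus S) \le v(N) + v(\emptyset) = 0$, so the zero vector is in the core. If $v(N) \ge 1$, convexity implies $SV_i(N,v) \le v(N) - v(N \setminus \{i\})$ for every $i$ (the marginal contribution to $N$ is maximal among all marginal contributions of $i$), hence
$$\sum_{i \in N} \bigl(v(N) - v(N \setminus \{i\})\bigr) \; \ge \; \sum_{i \in N} SV_i(N,v) \; = \; v(N) \; \ge \; 1.$$
Since every summand is a nonnegative integer, some $i^\star$ has $v(N) - v(N \setminus \{i^\star\}) \ge 1$, and the ``moreover'' clause applied to $i^\star$ produces the required $0$-$1$ core vector.

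The main obstacle is structural rather than computational: the statement is an iff pinned to a specific index $i$, but the induction naturally wants an unconditional $0$-$1$ core vector on the reduced game. Bridging this gap is precisely the role of the averaging inequality above, which pigeonholes a ``good'' player $i^\star$ and lets one bootstrap the iff into pure existence. The only other subtlety is verifying the case analysis for $x$ in the core of $(N,v)$, which is routine once one remembers that $\Psi^{i \to 1}_v(S) \ge v(S)$ whenever $S \neq \emptyset, N \setminus \{i\}$ and that the $S = N \setminus \{i\}$ boundary condition is handled by the marginal hypothesis.
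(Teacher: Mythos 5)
Your proof follows essentially the same route as the paper's: induction on $|N|$, with the sufficiency step handled by passing to the $1$-reduced game via \cref{lemma:reduced_game} and lifting a core vector back (your ``short case analysis'' is exactly the paper's \cref{lemma:core_from_reduced_game}), and necessity obtained from Efficiency plus the core constraint on $N\setminus\{i\}$. Two remarks. First, your explicit strengthening of the induction hypothesis to include unconditional existence of a $0$--$1$ core vector, settled by the averaging inequality $\sum_{i\in N}(v(N)-v(N\setminus\{i\}))\ge\sum_{i\in N}SV_i(N,v)=v(N)\ge 1$ and pigeonholing, is more careful than the paper, which at the corresponding point only asserts that ``because $v(N)>0$ \dots\ there must be a player with a positive contribution to the grand coalition''; your argument supplies the missing justification (note only that the summands need not be nonnegative --- integrality together with the sum being at least $1$ already forces some summand to be at least $1$, so nothing is lost). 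Second, your $v(N)=0$ branch imports an assumption --- nonnegativity of all coalition values --- that the paper's model does not make. This is a genuine soft spot, but it is shared with, indeed glossed over by, the paper's own proof: without some extra hypothesis such as $v(S)\le v(N)$ for all $S$, a convex, size-bounded integer indivisible game with $v(N)=0$ can contain a coalition of positive value (take $N=\{1,2,3\}$ with $v(\{1,2\})=1$, $v(\{3\})=v(\{1,3\})=v(\{2,3\})=-1$ and all remaining values $0$; this game is convex and size-bounded), and then no $0$--$1$ vector lies in the core at all even though player $1$ has marginal contribution $1$ to the grand coalition, so the lemma as literally stated fails there. In the only place the lemma is actually invoked (\cref{alg:main}, via \cref{lemma:alg_before_loop}) the grand coalition is guaranteed to have maximal value, which excludes such games; under that implicit restriction your argument, like the paper's, goes through.
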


Our analysis indicates the following scheme for distributing the value of the grand coalition in convex integer games.  
We select a player whose marginal contribution to the grand coalition equal at least $1$.  
We assign $1$ to the selected player and create a reduced game.  
Based on \Cref{lemma:reduced_game}, we know that the resulting game will also be a convex, size-bounded integer indivisible game.  
Thus, we repeat this step iteratively until the value of the grand coalition becomes zero.

It remains to specify the order in which we consider players.  
Among the players with a positive marginal contribution to the grand coalition, we will always choose the remaining player with the highest Shapley value in the original game.  
In this way, while in many games players with the largest remainders will receive the additional units, we ensure that the payoff vector is in the core.

\begin{algorithm}[t]
   \caption{Indivisible Shapley Value}
   \label{alg:main}
\begin{algorithmic}[1]
   \REQUIRE An indivisible game $(N, v)$
   \ENSURE A payoff vector $x \in \mathbb{Z}^N$
   \STATE $\pi \leftarrow$ permutation of players in decreasing order according to $SV_i(N,v) - \lfloor SV_i(N,v) \rfloor$.
   \STATE $\varphi_i \leftarrow \lfloor SV_i(N, v) \rfloor$ for every $i \in N$
   \STATE $v(S) \leftarrow v(S) - \sum_{i \in S} \lfloor SV_i(N,v) \rfloor $ for each $S \subseteq N$
   \STATE $(M,u) \leftarrow (N,v)$
   \FOR {$i \in N$ s.t. $SV_i(N,v) = \lfloor SV_i(N,v) \rfloor$}
   \STATE $(M,u) \leftarrow (M \setminus \{i\}, \Psi^{i \rightarrow 0}_{u})$
   \ENDFOR
   \STATE \colorbox{white!88!blue}{$u(S) \leftarrow \lfloor u(S) \rfloor$ for every $S \subseteq M$}

   \WHILE{$u(M) > 0$}
   \STATE $i \leftarrow$ first player in $\pi$ from $M$ s.t. $u(M) \ge u(M \setminus \{i\})+1$ \colorbox{white!88!blue}{ or $SV_i(M,u) > 0$}
   \STATE $(M, u) \leftarrow (M \setminus \{i\}, \Psi^{i \rightarrow 1}_{u})$
   \STATE $\varphi_i \leftarrow \varphi_i + 1$
   \ENDWHILE

   \STATE {\bfseries return} $\varphi$
\end{algorithmic}
\end{algorithm}
\subsection{General Convex Integer Games}

To apply our method to general convex integer indivisible games, we need to perform two additional steps.  
First, we assign to each player the floor of their Shapley value, $\lfloor SV_i(N,v) \rfloor$, and update the game accordingly:  
$u(S) = v(S) - \sum_{i \in S} \lfloor SV_i(N,v) \rfloor$.
Clearly this game remains convex.

Second, to ensure that players with zero Shapley value in the resulting game do not receive more than zero, we remove them from the game.  
This must be done carefully, using the reduced game; otherwise, the payoff vector we determine may not belong to the core of the original game.  

\Cref{alg:main} presents the full procedure. 
Text highlighted in blue marks the changes we will introduce in the next section that do not affect the procedure if the input game is a convex integer game.
Furthermore, in the following theorem, we prove that the resulting payoff vector is the closest indivisible payoff vector to the Shapley value that belongs to the core.  

\begin{theorem}
\label{theorem:convex}
    For convex integer games, the indivisible Shapley value:
	\begin{enumerate}[label=(\alph*),itemsep=1pt,parsep=1pt,topsep=1pt]
        \item is an indivisible payoff vector;
        \item satisfies Efficiency, Lower Quota and Upper Quota;
        \item belongs to the core of game $(N,v)$; and
        \item has a minimal distance to the Shapley value out of all payoff vectors satisfying (a)-(c).
    \end{enumerate}
\end{theorem}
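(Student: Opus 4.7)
My plan is to verify claims (a)--(c) by an induction matching the top-down structure of \Cref{alg:main} using \Cref{lemma:reduced_game,lemma:in-core-iff-marg-1}, and to handle claim (d) separately via an exchange argument. First I would show that the preprocessing yields a convex, size-bounded integer game $(M,u)$ on which the main loop operates. Subtracting $\lfloor SV_i(N,v) \rfloor$ from each player preserves convexity, since this operation merely shifts every marginal contribution by an integer constant; the subsequent $0$-reductions of the players with integer original Shapley value also preserve convexity via \Cref{lemma:reduced_game}, because in a convex game any player $i$ with $SV_i(M,u) = 0$ automatically satisfies $u(\{i\}) \le 0 \le u(M) - u(M \setminus \{i\})$ (the smallest and largest marginal contributions of $i$ bracket $SV_i$ from below and above), which is precisely the hypothesis of that lemma with $c = 0$. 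After preprocessing, all residual Shapley values lie in $(0,1)$, so $u(S) \le \sum_{i \in S} SV_i(M,u) < |S|$ by the core property of the Shapley value in convex games, making $(M,u)$ size-bounded.

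Claim (a) will be immediate, as $\varphi$ is built from integers by successive additions of $1$. For (b), Efficiency follows because each iteration of the while loop reduces $u(M)$ by exactly one, so the loop runs exactly $v(N) - \sum_i \lfloor SV_i(N,v) \rfloor$ times and $\sum_i \varphi_i = v(N)$ at termination. Lower Quota is immediate from the initialisation, and Upper Quota holds because every selected player is removed from $M$ (hence incremented at most once) while the preprocessing excludes players with integer Shapley value from the main loop altogether. For (c), the key observation---immediate from the definition of $\Psi^{i \to c}$---is a \emph{core-lift} property: if $y$ lies in the core of $(N \setminus \{i\}, \Psi^{i \to c}_v)$, then appending $c$ as player $i$'s payoff gives a vector in the core of $(N,v)$. \Cref{lemma:in-core-iff-marg-1} supplies a pivot of marginal contribution at least $1$ whenever $u(M)>0$---since any $0$--$1$ core vector of a convex, size-bounded game with positive grand-coalition value has a $1$-entry, which by that lemma corresponds to a player with marginal $\ge 1$---while \Cref{lemma:reduced_game} preserves convexity and size-boundedness under each $1$-reduction. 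The induction then lifts a core vector of the terminal $(M,u)$ with $u(M)=0$ (for which one checks that the zero vector is core-feasible using the invariants maintained by the reductions) back to a core vector of $(N,v)$.

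Claim (d) is the main obstacle. By Lower/Upper Quota, every vector $\varphi$ satisfying (a)--(c) rounds each $SV_i(N,v)$ either down or up, and Efficiency pins the number of round-ups at $\sum_i \{SV_i(N,v)\}$, where $\{x\}$ denotes the fractional part. Minimising the $L_1$ distance to the Shapley value is therefore equivalent to choosing a core-feasible round-up set $U$ that maximises $\sum_{i \in U} \{SV_i(N,v)\}$. Since the algorithm processes candidates for increment in strictly decreasing order of fractional part, restricted only by the marginal-contribution test, the task reduces to proving optimality of this greedy rule. I would establish this by an exchange argument: given any competing core-feasible round-up set $U^*$, take the earliest position in the algorithm's ordering at which the selections disagree, and consider the corresponding pair $j \in U \setminus U^*$, $k \in U^* \setminus U$ with $\{SV_j\} \ge \{SV_k\}$; using convexity together with \Cref{lemma:in-core-iff-marg-1}, I would argue that swapping $k$ for $j$ in $U^*$ keeps the resulting vector in the core and does not increase distance, then iterate to transform $U^*$ into the algorithm's output. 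Verifying that this swap preserves core membership---particularly when $k$ is processed much later than $j$ in the algorithm's trajectory, so that the reduced-game structure has evolved between the two selections---will be the most technical step.
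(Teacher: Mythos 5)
Your treatment of (a)--(c) follows essentially the same route as the paper: the preprocessing preserves convexity and size-boundedness (\Cref{lemma:reduced_game}), the core-lift property of reduced games (\Cref{lemma:core_from_reduced_game}) together with \Cref{lemma:in-core-iff-marg-1} drives the induction, and your observation that $SV_i(M,u)=0$ forces $u(\{i\})\le 0\le u(M)-u(M\setminus\{i\})$ in a convex game is the same justification the paper gives for applying the $0$-reductions (one must check this condition in each \emph{intermediate} reduced game, not only the first, which the paper does explicitly in \Cref{lemma:alg_before_loop}, but that part is routine).

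The genuine gap is in (d), and it is exactly the step you flag as ``the most technical'': why the exchange preserves core membership. The tools you name for it---convexity together with \Cref{lemma:in-core-iff-marg-1}---are not sufficient: that lemma only says, for a single distinguished player, whether \emph{some} $0$--$1$ core vector assigns it a $1$; it says nothing about whether a \emph{particular} competing core vector survives a \emph{particular} swap, which is what the exchange needs. The paper closes this with a separate structural fact (\Cref{lemma:convexintersection}): for a core vector $x$ of a convex game, the family of tight coalitions (those with $x(S)=v(S)$) is closed under union and intersection, by supermodularity. Hence there is a minimal tight coalition $S_{\cap}$ containing the player $i$ to be decremented, and the swap partner must lie inside $S_{\cap}$; if the competing core vector $y$ had no $1$ on $S_{\cap}\setminus\{i\}$ where $x$ has a $0$, then $y(S_{\cap})<x(S_{\cap})=v(S_{\cap})$ and $y$ would itself violate the core (\Cref{lemma:swapping1}). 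Without this lattice argument your exchange does not close, especially in the case you yourself worry about, where the two swapped players are selected at distant stages of the algorithm. Two smaller points: you should first establish that the algorithm's output is lexicographically maximal among core vectors satisfying the quotas (the paper's \Cref{lemma:lex}), so that at the earliest disagreement it is always the competitor that rounds down; and your reduction of (d) to maximizing $\sum_{i\in U}\{SV_i(N,v)\}$ is specific to $L_1$, whereas the paper's coordinatewise swap comparison yields minimality for every $L^p$ norm simultaneously.
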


\section{Indivisible Shapley Value for General Games}\label{section:general_games}

Let us discuss how to generalize our method for all games.
We begin with general convex fractional games.  
As we have argued, the core of such games is non-empty and contains the Shapley value.  
However, in contrast to convex integer games, it might happen that no indivisible payoff vector lies in the core.  
In fact, indivisible payoff vectors might be arbitrarily far from the core.

\begin{example}
\label{example:fractional-half-game}
\emph{(Fractional-Half-Game)}  
Consider an integer indivisible game with $N = \{1, \dots, n\}$, where $n$ is even, and $v(S) = |S|/2$.  
We have $v(N) = n/2$, and clearly $x = [0.5, \dots, 0.5]$ is the only payoff vector in the core.  
However, in every indivisible payoff vector, there exist at least $n/2$ players with a payoff of $0$.  
Thus, for the coalition $S$ of these players, we have $\sum_{i \in S} x_i \!=\! 0 \!<\! n/4 \!=\! v(S)$.  
\hfill $\lrcorner$  
\end{example}

To adapt our method to general fractional games, we introduce two changes (marked in blue in \Cref{alg:main}).  
First, after assigning the floor of the Shapley value to all players and removing players with zero Shapley value from the resulting game, we round down all the values in the game (line 8).  
Second, we modify the condition used to select the player who receives an additional $1$ in the final loop (line 10).  
See \cref{appendix:general_games} for details.
We note that both modifications do not change how the procedure works in the case of convex integer games.  
We also prove that \Cref{alg:main} in general games also satisfies Efficiency, Lower Quota and Upper Quota.

\section{ISV-Approach for Large Games}\label{section:heuristic}

The indivisible Shapley value operates under the assumption that the full coalitional game is given as input. However, since the size of the game grows exponentially with respect to the number of players, this assumption is not realistic. In this section, we present an algorithm that simulate the behavior of the indivisible Shapley value: approximates the Shapley value and then operates in polynomial time to determine the final outcome.

\begin{algorithm}[t!]
   \caption{ISV-Approach for Large Games}
   \label{alg:heuristic}
   
\begin{algorithmic}[1]
\REQUIRE An indivisible game \((N,v)\), where \(v\) is a black box
\ENSURE A payoff vector \(x \in \mathbb{Z}^N\)
   \STATE \(x_i \leftarrow 0\) for every \(i \in N\)
   \STATE \((\varphi_i)_{i \in N} \leftarrow\) approx. of the Shapley values
   \STATE \((\varphi_{ij})_{i,j \in N} \leftarrow\) approx. of the Shapley value matrix
   \WHILE{\(v(N) > \sum_{i \in N} x_i\)}
      \STATE \(i \leftarrow\) player with the highest value \(\varphi_i\)
      \STATE \(x_i \leftarrow x_i + 1\)
      \IF{\(\varphi_i > 1\)}
         \STATE \(\varphi_i \leftarrow \varphi_i - 1\)
      \ELSE
      	\STATE \(\varphi_j \leftarrow \varphi_j - \frac{(1 - \varphi_i) \varphi_{ij}}{{\sum_{k \in N \setminus \{i\}} \varphi_{ik}}}\) for every \(j \in N \setminus \{i\}\)
      	\STATE \(\varphi_i \leftarrow 0\)
      \ENDIF
   \ENDWHILE
   \STATE {\bfseries return} \(x\)
\end{algorithmic}
\end{algorithm}

We utilize the notion of the Shapley value matrix $(SV_{ij}(N,v))_{i,j \in N}$~\cite{Hausken:Mohr:2001}. 
In this matrix, the Shapley value of a player is decomposed into a vector $(SV_{i1}(N,v), \dots, SV_{in}(N,v))$, where each element represents the synergy between two players: the contribution of player \(j\) to the value of player \(i\):
\begin{equation}\label{eq:shapley_matrix}
SV_{ij}(N,v) = \sum_{i,j \in S \subseteq N} \frac{\Delta_v(S)}{|S|^2}.
\end{equation}
Clearly, \(SV_i(N,v) = \sum_{j \in N} SV_{ij}(N,v)\).
To avoid exponential blow-up, we rely only on the Shapley values and the Shapley value matrix as sources of information about the game and the dependencies between players.

Our procedure operates as follows. 
First, we approximate the Shapley values and the Shapley value matrix by sampling random permutations.
Then, in a loop, we select the player with the highest Shapley value and add 1 to this player's payoff.
To account for this increase, we decrease the player's Shapley value by 1.
However, if the Shapley value is smaller than 1, we also reduce the Shapley values of other players who have synergy with this player.
Specifically, if the Shapley value of player \(i\) equals \(\varphi_i < 1\), we decrease the Shapley values of other players by \((1 - \varphi_i)\) in proportions corresponding to their values \(SV_{ij}(N,v)\).

\Cref{alg:heuristic} outlines our procedure.
Note that all players will receive their lower quota before we select a player with the Shapley value smaller than 1. 
The subsequent steps mimic the behavior of the indivisible Shapley value: whenever a payoff of 1 is assigned to a player, its contribution to others is removed from the game.
In our algorithm for large games, this contribution is subtracted from the Shapley values of others players. 

To approximate the Shapley value, we use a standard method based on sampling of permutations can be used~\cite{Calvo:etal:1999}.
However, no method exists in the literature to approximate the Shapley value matrix. 
To this end, we propose such a method, depicted in \Cref{alg:matrix_approx}. 
The proof of correctness can be found in \Cref{prop:matrix_approx} in the appendix.

\begin{algorithm}[t!]
\caption{Approximation of the Shapley value matrix}
\label{alg:matrix_approx}
   
\begin{algorithmic}[1]
\REQUIRE A coalition game $(N,v)$, where $v$ is a black box, number of samples $k$
\ENSURE The Shapley value matrix $SV_{ij}(N,v)$
   \STATE $(\varphi_{ij})_{i,j \in N} \leftarrow 0$ for every $i,j \in N$
   \FOR{$k$ random permutations $\pi$}
      \FOR{$i \in N$}
          \FOR{$j \in S^{\pi}_i$ s.t. $i \le j$}
             \STATE $\varphi_{ij} \leftarrow \varphi_{ij} + \frac{1}{k} (v(S^{\pi}_i \cup \{i\}) - v(S^{\pi}_i) - v(S^{\pi}_i \setminus \{j\} \cup \{i\}) + v(S^{\pi}_i \setminus \{j\})) \cdot \sum_{t=|S^{\pi}_i|+1}^{|N|} \frac{1}{t}$
          \ENDFOR 
      \ENDFOR
   \ENDFOR
   \FOR{$i, j \in N$ s.t. $i < j$}
      \STATE $\varphi_{ji} \leftarrow \varphi_{ij}$
   \ENDFOR
   \STATE {\bfseries return} $(\varphi_{ij})_{i,j \in N}$
\end{algorithmic}
\end{algorithm}

\section{Distinguishable Objects}\label{section:positive}

So far, we have assumed that players share some number of identical objects (e.g., parliamentary seats).  
In this section, we consider a setting where objects are non-identical.  

For a set of objects $O = \{o_1,\dots,o_k\}$ of the same value, let $\mathcal{S} = (S_1, S_2, \dots, S_k)$ be a list of object owners, where each $S_i \subseteq N$ is a subset of players.  
Our goal is to find a fair allocation of objects to the players. By an \emph{allocation}, we mean a sequence $(X_1, \dots, X_n)$ such that $X_1 \cup \dots \cup X_n = O$, $X_i \cap X_j = \emptyset$ for $i \neq j$, and $o_j \in X_i$ implies $i \in S_j$.  
The last condition ensures each object is allocated to its owner.

\begin{example}\label{example:antlers}  
A group of five hunters wishes to divide trophy antlers from several hunting trips.  
The hunters listed the participants involved in acquiring consecutive trophy antlers as follows: $\{1, 2, 3\}, \{1, 2, 5\}, \{3, 4\}, \{3, 4, 5\}$. How should they divide the antlers among themselves?  
\hfill \(\lrcorner\)  
\end{example}  

To address this, we define an indivisible coalitional game $v_{\mathcal{S}}: 2^N \rightarrow \mathbb{N}$, where the value of a coalition is the number of objects owned by the coalition or any of its subsets:  
\[
v_{\mathcal{S}}(S) = |\{S_j \in \mathcal{S} : S_j \subseteq S\}| \quad \text{for every } S \subseteq N.  
\]  
This game is clearly positive: the Harsanyi dividend of any coalition is the number of its objects, which is non-negative.  
The Shapley value of this game assigns to each player an equal share of each object they own:
\begin{equation}\label{eq:shapley_matching}
SV_i(N, v_\mathcal{S}) = \sum_{S_j \in \mathcal{S}, i \in S_j} \frac{1}{|S_j|}.
\end{equation}
Since positive games are convex, we know that the indivisible Shapley value produces an indivisible payoff vector that approximates the Shapley value and belongs to the core.  
However, it is not known whether it is always possible to find an allocation of objects that matches the numbers given by the indivisible Shapley value.  
In this section, we develop an algorithm that determines the indivisible Shapley value and finds the corresponding allocation.

Our approach builds upon matching theory.  
From a list of groups $\mathcal{S}$, we create a bipartite graph $G_{\mathcal{S}} = (N, O, E)$, where the left-hand side nodes represent players $N$, the right-hand side nodes represent objects $O = \{o_1, \dots, o_k\}$, and $E$ is the set of edges connecting players to the objects they own:  
$E = \{\{i, o_j\} : i \in N, o_j \in O, i \in S_j\}$.
Our goal is to find a subset of edges $E' \subseteq E$ such that:
\begin{itemize}
\item every object is incident to exactly one edge: $|\{e \in E' : o_j \in e\}| = 1$ for every $o_j \in O$, and 
\item the number of edges incident to each player-node $i$, i.e., $|\{e \in E' : i \in e\}|$, equals the indivisible Shapley value of player $i$ in game $(N, v_{\mathcal{S}})$.
\end{itemize}
Note that the indivisible Shapley value is not given but needs to be determined by the algorithm.

As in the original method from \Cref{alg:main}, we divide the procedure into two parts.  
First, we assign to each player $i$ $\lfloor SV_i(N,v) \rfloor$ objects.  
Then, we distribute the remaining objects (potentially modifying the initial assignments).

\begin{algorithm}[t!]
   \caption{Indivisible Shapley Value for Positive Games}
   \label{alg:matching}
   
\begin{algorithmic}[1]
\REQUIRE Objects $O$, players $N$, and a list of object owners $\mathcal{S}$
\ENSURE An allocation of objects $X = (X_1, \dots, X_n)$
   \STATE create a bipartite graph $G_{\mathcal{S}} = (N,O,E)$
   \STATE compute the Shapley value in game $(N, v_\mathcal{S})$
   \STATE duplicate player-node $i$ $\lfloor SV_i(N,v_\mathcal{S}) \rfloor$ times
   \STATE find a perfect matching of player-nodes into objects (using the Hopcroft-Karp algorithm)
   \STATE $\pi \leftarrow$ permutation of players in decreasing order according to $SV_i(N,v_{\mathcal{S}}) - \lfloor SV_i(N,v_{\mathcal{S}}) \rfloor$
   \FOR{$i$ in order $\pi$}
      \IF{$SV_i(N,v_{\mathcal{S}}) > \lfloor SV_i(N,v_{\mathcal{S}}) \rfloor$}
         \STATE add one copy of player-node $i$
         \STATE find an augmenting path with the new node, if exists
	  \ENDIF
   \ENDFOR
   \STATE {\bfseries return} $(\{o_j \in O : \text{ edge } \{i,o_j\} \text{ is selected}\})_{i \in N}$
\end{algorithmic}
\end{algorithm}

\paragraph{Step 1}  
We compute the Shapley value by traversing the list of groups according to \Cref{eq:shapley_matching}.  
For each player $i \in N$, we create $\lfloor SV_i(N,v) \rfloor$ copies of this player's node.  
Next, we find a perfect matching of player-nodes into objects using the Hopcroft-Karp algorithm~\cite{Hopcroft:Karp:1973}.
Using Hall's marriage theorem, we can show that a perfect matching exists. For each coalition $S \subseteq N$, the number of objects partially owned by at least one player in $S$ equals:
\[ |\{o_j \!\in\! O : S \cap S_j \neq \emptyset\}| \geq \sum_{o_j \in O} \frac{|S \!\cap\! S_j|}{|S_j|} = \sum_{i \in S} SV_i(N,v). \]
Thus, for every coalition, the number of objects connected to player-nodes is greater than or equal to the number of these player-nodes.

\paragraph{Step 2}  
We consider players in descending order of their remainders, i.e., $SV_i(N,v_{\mathcal{S}}) - \lfloor SV_i(N, v_{\mathcal{S}}) \rfloor$.  
For each player, we add another copy of their player-node and attempt to extend the current matching by finding an augmenting path from the new node.  
If such a path exists, we extend the matching. Otherwise, this player cannot obtain an additional object without violating the core. 

\Cref{alg:matching} presents our method.  
The following theorem shows that it finds an allocation corresponding to the indivisible Shapley value.

\begin{theorem}
\label{theorem:matching}
Given a list of object owners $\mathcal{S}$, \Cref{alg:matching} finds an allocation of objects $(X_1, \dots, X_n)$ such that $(|X_i|)_{i \in N}$ is the indivisible Shapley value of game $(N, v_{\mathcal{S}})$ and runs in time $O(|\mathcal{S}|(|\mathcal{S}|+|N|)^{3/2} + |\mathcal{S}| |N| (|\mathcal{S}|+|N|))$.
\end{theorem}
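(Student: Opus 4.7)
The plan is to prove correctness by identifying step~2 of \Cref{alg:matching} with a matroid greedy procedure, and then to check the running-time bound. Step~1's feasibility, i.e., the existence of a player-saturating matching of the $\lfloor SV_i(N,v_\mathcal{S}) \rfloor$ duplicates, is already verified in the paragraph introducing Step~1, using Hall's theorem together with the fact that $v_\mathcal{S}$ is positive (hence the Shapley value is in the core).

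For the matroid greedy identification, I call a subset $A \subseteq N$ \emph{bonus-feasible} if $G_\mathcal{S}$ admits a matching in which each player $i$ is saturated $\lfloor SV_i(N,v_\mathcal{S}) \rfloor + \mathbb{1}_A(i)$ times. Because the base from step~1 is independent in the transversal matroid of $G_\mathcal{S}$ (taking $\lceil SV_i(N,v_\mathcal{S}) \rceil$ copies of each player), the bonus-feasible subsets of $N$ form the independent sets of the contraction of that transversal matroid by step~1's base; call this matroid $\mathcal{F}$. Its rank is $K := v_\mathcal{S}(N) - \sum_i \lfloor SV_i(N,v_\mathcal{S}) \rfloor$: any bonus-feasible $A$ satisfies $|A| \le K$ by cardinality, and \Cref{theorem:convex} applied to the convex game $v_\mathcal{S}$ exhibits a bonus-feasible $A^\star$ of size $K$, namely the bonus set of the indivisible Shapley value of $v_\mathcal{S}$. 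By K\"onig's theorem, an augmenting path from the added copy of $i$ in step~2 exists iff $A \cup \{i\}$ is still bonus-feasible, so step~2 is exactly the standard matroid greedy that scans $N$ in the order $\pi$ (decreasing remainders $SV_i - \lfloor SV_i \rfloor$) and adds each element that preserves independence. It therefore produces a basis of $\mathcal{F}$ maximizing $\sum_{i \in A}(SV_i - \lfloor SV_i \rfloor)$, which is equivalent to minimizing the distance from the resulting integer vector to $SV$.

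To identify this basis with the output of \Cref{alg:main}, I would argue by induction on the number of $+1$ assignments that both algorithms maintain the same bonus set at every step. The base case is immediate, and for the inductive step both algorithms consider the same next candidate $p$ in the order~$\pi$. Their acceptance tests agree: \Cref{alg:main}'s condition---positive marginal contribution of $p$ in the current reduced game---is equivalent, by iterated application of \Cref{lemma:reduced_game} and \Cref{lemma:in-core-iff-marg-1}, to the existence of a $0$-$1$ core payoff of that reduced game giving $p$ value~$1$; translating this through Hall's condition (as above) yields bonus-feasibility of $A \cup \{p\}$, which is \Cref{alg:matching}'s test. Because both algorithms process $\pi$ in the same order and apply equivalent tests, they construct the same bonus set, so the returned sizes $(|X_i|)$ equal the indivisible Shapley value of $v_\mathcal{S}$. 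Validity of the allocation follows because every edge of $G_\mathcal{S}$ connects a player to an object it owns, and efficiency $\sum_i |X_i| = v_\mathcal{S}(N)$ from $|A|=K$.

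For the running-time bound, the duplicated graph has $V = O(|\mathcal{S}| + |N|)$ vertices, since $\sum_i \lceil SV_i(N,v_\mathcal{S}) \rceil \le v_\mathcal{S}(N) + |N| = |\mathcal{S}| + |N|$, and $E = O(|\mathcal{S}|(|\mathcal{S}|+|N|))$ edges. Hopcroft--Karp in step~1 runs in $O(E\sqrt{V}) = O(|\mathcal{S}|(|\mathcal{S}|+|N|)^{3/2})$, giving the first term, and step~2 performs at most $|N|$ single augmenting-path searches, each in $O(E) = O(|\mathcal{S}|(|\mathcal{S}|+|N|))$ time by BFS/DFS, giving the second term. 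The main obstacle I anticipate is the step-by-step equivalence of the two acceptance tests: ties between equally good bonus sets could otherwise cause the two algorithms to produce distinct outputs of the same minimum distance, so I must argue agreement at every iteration rather than only for the final weight, which requires carefully tracking how the iterated Davis--Maschler reductions of $v_\mathcal{S}$ interact with Hall's condition on the residual bipartite graph.
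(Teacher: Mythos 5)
Your reduction to a matroid greedy is sound in outline (the bonus-feasible sets do form the independent sets of a contracted transversal matroid, and Step~2 is the greedy over it in the order $\pi$), and your complexity analysis matches the paper's. But the proof has a genuine gap exactly where you flag it: the identification of the greedy's output with the output of \Cref{alg:main}. You propose to show by induction that the two algorithms maintain the same bonus set because their acceptance tests are ``equivalent,'' but you never establish this equivalence. The chain you sketch needs, at minimum, the \emph{converse} of \Cref{lemma:core_from_reduced_game} (that restricting a core vector with $x_i=c$ yields a core vector of the $c$-reduced game), an iterated version of it across all previous reductions, and the two-way correspondence between $0$--$1$ core vectors of the floor-subtracted game and matchings in the residual bipartite graph. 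None of these is carried out, and you yourself describe the interaction of the iterated Davis--Maschler reductions with Hall's condition as the main unresolved obstacle. As written, the central claim of the theorem --- that $(|X_i|)_{i\in N}$ equals the indivisible Shapley value, not merely some minimum-distance core vector --- is not proved.

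The paper sidesteps the step-by-step simulation entirely, and you have all the ingredients to do the same. \Cref{lemma:lex} characterizes the output of \Cref{alg:main} \emph{extensionally}: it is the unique lexicographically maximal vector, with respect to $\pi$, among all payoff vectors satisfying Efficiency, Lower Quota, Upper Quota and core membership. \Cref{lemma:positive_full_core} (every indivisible core vector is realizable by a matching) together with the trivial converse (every allocation's size vector is in the core) shows that this feasible set coincides with your set of bases of $\mathcal{F}$ shifted by the floors. Since a greedy scan in order $\pi$ over a downward-closed (indeed matroidal) feasibility structure returns the lexicographically maximal feasible set with respect to $\pi$, the two outputs must coincide --- no comparison of acceptance tests at intermediate steps is needed, and the tie-breaking worry you raise dissolves because lexicographic maximality pins down a unique vector. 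I recommend you replace the inductive simulation with this argument; otherwise you must supply the missing equivalences in full. (A small side remark: the statement ``an augmenting path from the new copy exists iff a larger matching exists'' is Berge's lemma, not K\"onig's theorem, and you should also note that the current matching realizes exactly the floors plus the current bonus set, so the symmetric difference argument applies.)
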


Our algorithm runs in polynomial time with respect to the number of objects, and also to the value of the grand coalition in the corresponding game, or, after minor adaptation, to the number of non-zero Harsanyi dividends (see \Cref{proposition:positive_polynomial} in the appendix for details).
This demonstrates that it is possible to compute the indivisible Shapley value in polynomial time for positive games represented as a list of non-zero dividends. Such a representation is obtained, for example, from marginal contribution nets without negative weights or literals~\cite{Ieong:Shoham:2005}.  

\section{Case Studies}\label{section:case_studies}
In this section, we discuss three applications of our method: its use as an apportionment approach in approval elections, its application for identifying key regions in image classification tasks, and its role in ensuring a fair distribution of seats among political parties forming a coalition.

\subsection{Approval-Based Apportionment}\label{section:french_election}

Consider the problem of distributing a fixed number of parliamentary seats among political parties, where each voter $i$ indicates a subset $A_i$ of approved parties~\cite{Brill:etal:2024, Brams:etal:2019}. As discussed in the introduction, this can be viewed as an indivisible coalitional game, with parties being players and the value of a coalition $S$ defined as the fraction of voters $i$ with $A_i \subseteq S$, multiplied by the total number of seats.

Clearly, Indivisible Shapley Value can be used to distribute the seats among the parties, thus inducing an apportionment method. Below, we illustrate its behavior based on the data from the 2002 French presidential elections ~\cite{french_election,preflib}. There, we interpret each presidential candidate as a political party.\footnote{
While using approval ballots for committee elections is already well-established in the literature~\cite{lac-sko:b:approval-survey}, their application to apportionment is a recent idea. Since, no datasets directly related to approval-based apportionment are available, we use datasets from presidential elections, where, intuitively, each candidate can be seen as representing their own party.}  We treat each voter's top three preferences, indicated in the dataset, as an approval vote.
 
We compare the distribution of 100 seats computed by the Indivisible Shapley Value with the actual election outcome and the outcome produced by Proportional Approval Voting (PAV)~\cite{lspav}, a rule exhibiting particularly strong properties pertaining to proportionality.
\cref{figure:french} presents the results.

\begin{figure}[tb!]
\centering

\begin{tikzpicture}[scale=0.86]
\begin{axis}[
    xbar stacked,
    bar width=18pt, 
    width=0.72\textwidth, 
    height=4.6cm, 
    xlabel={Share of mandates (\%)},
    ytick=data,
    yticklabels={ISV, PAV, Actual},
    xtick={0,20,40,60,80,100},
    tick label style={font=\small},
    y tick label style={xshift=-6pt}, 
    ylabel shift=-10pt, 
    xmin=0,
    xmax=100,
    enlarge y limits=0.2,
    legend style={
        at={(1.15,0.5)}, 
        anchor=west, 
        font=\small, 
        legend columns=2, 
        column sep=1em, 
        cells={anchor=west}, 
        draw=none,
        legend cell align=left,
        /tikz/every even column/.append style={column sep=1em}
    },
    nodes near coords,
    every node near coord/.append style={
        font=\tiny, 
        opacity=1
    },
]
\definecolor{gluckstein}{RGB}{195,95,90} 
\definecolor{laguiller}{RGB}{205,115,95} 
\definecolor{besancenot}{RGB}{215,135,100} 
\definecolor{hue}{RGB}{220,155,105} 
\definecolor{taubira}{RGB}{220,180,110} 
\definecolor{jospin}{RGB}{210,205,120} 
\definecolor{chevenement}{RGB}{180,205,130} 
\definecolor{mamere}{RGB}{145,195,135} 
\definecolor{lepage}{RGB}{125,190,160} 
\definecolor{bayrou}{RGB}{110,180,180} 
\definecolor{saintjosse}{RGB}{100,170,200} 
\definecolor{boutin}{RGB}{90,150,195} 
\definecolor{madelin}{RGB}{85,130,185} 
\definecolor{chirac}{RGB}{90,115,175} 
\definecolor{megre}{RGB}{100,105,170} 
\definecolor{lepen}{RGB}{110,95,160} 

\addplot+[gluckstein,fill=gluckstein,draw=white,line width=0.15pt] coordinates {(1,0) (0,1) (1,2)};
\addplot+[laguiller,fill=laguiller,draw=white,line width=0.15pt] coordinates {(5,0) (2,1) (6,2)};
\addplot+[besancenot,fill=besancenot,draw=white,line width=0.15pt] coordinates {(5,0) (3,1) (4,2)};
\addplot+[hue,fill=hue,draw=white,line width=0.15pt] coordinates {(3,0) (0,1) (4,2)};
\addplot+[taubira,fill=taubira,draw=white,line width=0.15pt] coordinates {(5,0) (1,1) (2,2)};
\addplot+[jospin,fill=jospin,draw=white,line width=0.15pt] coordinates {(15,0) (30,1) (16,2)};
\addplot+[chevenement,fill=chevenement,draw=white,line width=0.15pt] coordinates {(10,0) (10,1) (5,2)};
\addplot+[mamere,fill=mamere,draw=white,line width=0.15pt] coordinates {(9,0) (8,1) (5,2)};
\addplot+[lepage,fill=lepage,draw=white,line width=0.15pt] coordinates {(5,0) (1,1) (2,2)};
\addplot+[bayrou,fill=bayrou,draw=white,line width=0.15pt] coordinates {(10,0) (14,1) (7,2)};
\addplot+[saintjosse,fill=saintjosse,draw=white,line width=0.15pt] coordinates {(2,0) (0,1) (4,2)};
\addplot+[boutin,fill=boutin,draw=white,line width=0.15pt] coordinates {(2,0) (0,1) (1,2)};
\addplot+[madelin,fill=madelin,draw=white,line width=0.15pt] coordinates {(7,0) (2,1) (4,2)};
\addplot+[chirac,fill=chirac,draw=white,line width=0.15pt] coordinates {(14,0) (25,1) (20,2)};
\addplot+[megre,fill=megre,draw=white,line width=0.15pt] coordinates {(2,0) (0,1) (2,2)};
\addplot+[lepen,fill=lepen,draw=white,line width=0.15pt] coordinates {(5,0) (4,1) (17,2)};

\legend{
    Gluckstein, Laguiller, Besancenot, Hue, Taubira, Jospin, 
    Chevènement, Mamère, Lepage, Bayrou, Saint-Josse, Boutin, 
    Madelin, Chirac, Mégret, Le Pen
}

\node[anchor=south, font=\bfseries] at (axis cs:50,3.5) {French Election Mandate Distribution};

\end{axis}
\end{tikzpicture}
\caption{Share of mandates computed by various apportionment rules for the 2002 French presidential elections experiment data.}
\label{figure:french}
\end{figure}
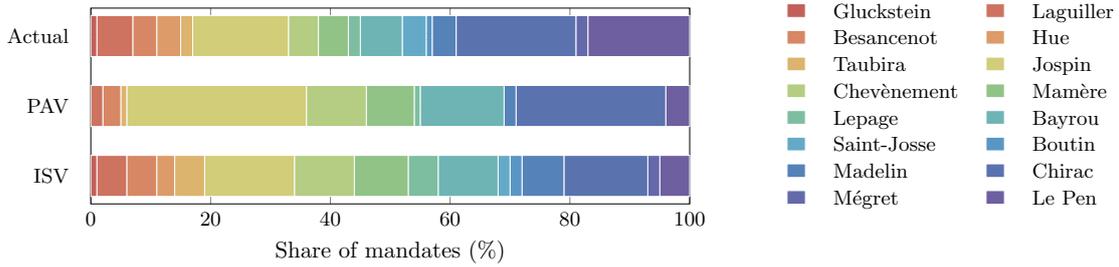

While PAV is known to select candidates fairly from the voters’ perspective, it can produce solutions that are arguably unfair to political parties. For example, if all voters approve the same two parties, PAV may allocate all seats to one party, since such a solution gives each voter the same number of representatives as evenly splitting the seats between the two. Our simulations indicate that this is likely in practice---in \Cref{figure:french} we see that smaller and extreme parties are underrepresented by PAV.

The Indivisible Shapley Value, on the other hand, provides a fairer allocation from the perspective of parties, assigning seats in a manner that more closely reflects the actual vote distribution. At the same time, by using approvals the voters can indicate their willingness for a compromise, and so ISV was able to transfer some seats from the far-right to more moderate parties, compared to the actual outcome (with each voter voting for a single candidate). Interestingly, in this instance, our method satisfies EJR+, a strong axiom of proportional representation~\cite{ejrplus}, indicating that being fair towards parties does not come at the cost of proportionally representing the voters. That said, the experiments on other datasets indicate that this property in principle must not not always be satisfied.

\subsection{Key Regions in Image Classification}\label{section:image_classification} 
Next, we apply \cref{alg:heuristic} for identifying key regions in an image classification task.

We begin with a step-by-step explanation of the procedure on a single example.
We use the image classification model EfficientNet B0 \cite{Tan:Le:2019} and dataset MiniImageNet that is a downscaled subset of ISVLRC ImageNet-1k~\cite{Russakovsky:etal:2015}, while retaining all 1,000 labels. We consider the image presented in \Cref{figure:bear}, which is classified as belonging to the \emph{brown bear} class.

In the first step, we use the skimage.segmentation.slic function from the scikit-image library \cite{van_der_Walt_2014} to divide the image into 247 regions using K-means clustering in the Color-(x, y, z) space~\cite{6205760}. We use the compactness parameter of 20. In order to identify key regions, we treat each region as a player and groups of regions as coalitions. The value of a coalition is defined as the probability that the model correctly classifies the image when the other regions are replaced with the average of 50 randomly selected images from the dataset.

We approximate the Shapley value matrix using \Cref{alg:matrix_approx} with 50,000 samples and estimate the Shapley value by summing the rows in the estimated matrix. We then increase the values by a constant to eliminate negative Shapley values and scale them, to ensure that the value of the grand coalition equals the number of regions we aim to select: $v(N) = 10$ in our case. Similarly, we modify the Shapley value matrix: adding a value to a player corresponds to adding this value to the diagonal of the matrix, while multiplication affects all matrix values.

\begin{figure}[bt!]
\centering
\begin{tikzpicture}[x=4cm,y=4cm]
    \node at (-1.2,0) {\includegraphics[width=0.29\textwidth]{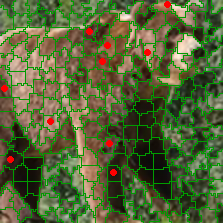}};
    \node at (0,0) {\includegraphics[width=0.29\textwidth]{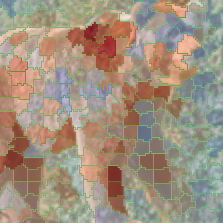}};
    \node at (1.2,0) {\includegraphics[width=0.29\textwidth]{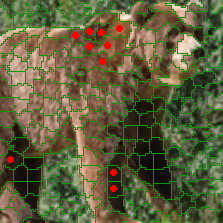}};
\end{tikzpicture}
\caption{On the left, our method with $\alpha=0.5$; in the middle, a continuous Shapley value heatmap overlay; and on the right, the baseline method which selects regions with the highest Shapley values.}
\label{figure:bear}
\vspace{-0.3cm}
\end{figure}



\Cref{figure:bear} compares the results of our method for $\alpha=0.5$ with a Shapley value heatmap overlaid on the image and a baseline approach that simply selects regions with the highest Shapley values (which is equivalent to our method with $\alpha=0.0$). Red dots indicate the regions chosen by our method and the baseline. The heatmap colors represent feature importance: red for important regions and blue for negatively contributing ones. 

Our method identified a more diverse set of important features compared to the baseline method, which favored features with high synergy, selecting seven dots on the bear's shoulders.
The sum of Shapley values of features selected by our method was lower only by 27\% than the sum for features selected by the baseline method, while their sum of synergies was lower by 67\%.
We applied Euclidean k-means clustering~\cite{macqueen1967some} (100 runs for each $k \in [2,9]$) and selected the solution with the highest Silhouette Score~\cite{Silhouette}. 
The highest score was 0.58 for our method and 0.75 for the baseline method, both with three clusters. 
Overall, we conclude that our method trades off a small portion of the total Shapley value to achieve greater diversity, measured by lower total synergies and a lower Silhouette Score, indicating a less clustered selection of regions.





\Cref{figure:zoo} presenting a comparison of both methods across different images can be found in \cref{visualization:zoo} and statistics for them are provided in \cref{experiment:statistics}. 

We ran code for our case study on a supercomputer with nodes composed of Nvidia A100 GPUs and AMD EPYC CPUs. Collecting 1000 samples for Shapley value matrix took us roughly one hour.

\subsection{Coalition formation}\label{section:polish_election}
The D'Hondt method is widely used to allocate seats in parliamentary elections. It is known to favor larger parties, a bias that becomes even more pronounced when the method is applied independently across multiple regions. Thus, smaller parties can increase their representation by forming a coalition and running on a single list, receiving more votes than the combined total of their separate lists. 

The Indivisible Shapley Value offers a principled way to fairly distribute the seats gained by forming such a coalition.
Specifically, we model a cooperative game in which the coalition parties are the players, and the value of any subcoalition is defined by the number of seats it would secure.

As an example, we investigate the 2023 Polish parliamentary elections, in which three opposition parties -- Civic Coalition, Third Way, and New Left -- considered forming a coalition but ultimately ran separately. Using data from the Polish National Electoral Commission, we simulated various coalition scenarios by aggregating the votes received by each subcoalition across all regions.


We found that all three parties would gain additional seats by forming a coalition. Without forming a coalition, Civic Coalition, Third Way and New Left received 157, 65 and 26 seats. In coalition after applying Indivisible Shapley Value they would receive 160, 72 and 33 seats, respectively.
Notably, both the Shapley Value and its indivisible counterpart lie outside the core. We found that, in both the 2023 and 2019 Polish parliamentary elections, only games formed by coalitions of two parties are convex. This suggests that the most stable coalition would be formed by the New Left and Third Way.

\vspace{-0.15cm}
\section{Conclusions}
\vspace{-0.15cm}

We proposed a novel model of indivisible coalitional games and defined an indivisible Shapley value, an extension of the classic Shapley value, assigning each player an integer. 
We proved it belongs to the core in convex integer games and proposed an algorithm that computes a corresponding allocation for non-identical objects. 
Morevoer, we introduced a method that mimics the indivisible Shapley value for large games and tested our approach on three case studies.

There is a wide array of potential future directions. 
Solution concepts other than the Shapley value can be analyzed for our model. We find randomized algorithms allowing to achieve fairness in expectation a particularly interesting direction.
Also, properties of our method can be studied in specific applications.
Furthermore, it would be interesting to test concepts from the social choice literature in our setting.




\section*{Acknowledgement}
Oksar Skibski was supported by the Polish National Science Centre Grant No.2023/50/E/ST6/00665.
Miko{\l}aj Czarnecki, Micha{\l} Korniak and Piotr Skowron were supported by European Union (ERC, PRO-DEMOCRATIC,
101076570). 
Views and opinions expressed are however those of the author(s) only and do not necessarily reflect those of the European Union or the European Research Council.
Neither the European Union nor the granting authority can be held responsible for them. 
We acknowledge Polish high-performance computing infrastructure PLGrid (HPC Center: ACK Cyfronet AGH) for providing computer facilities and support within computational grant no. PLG/2025/018243
\vspace{-0.3cm}\begin{center}\includegraphics[width=5cm]{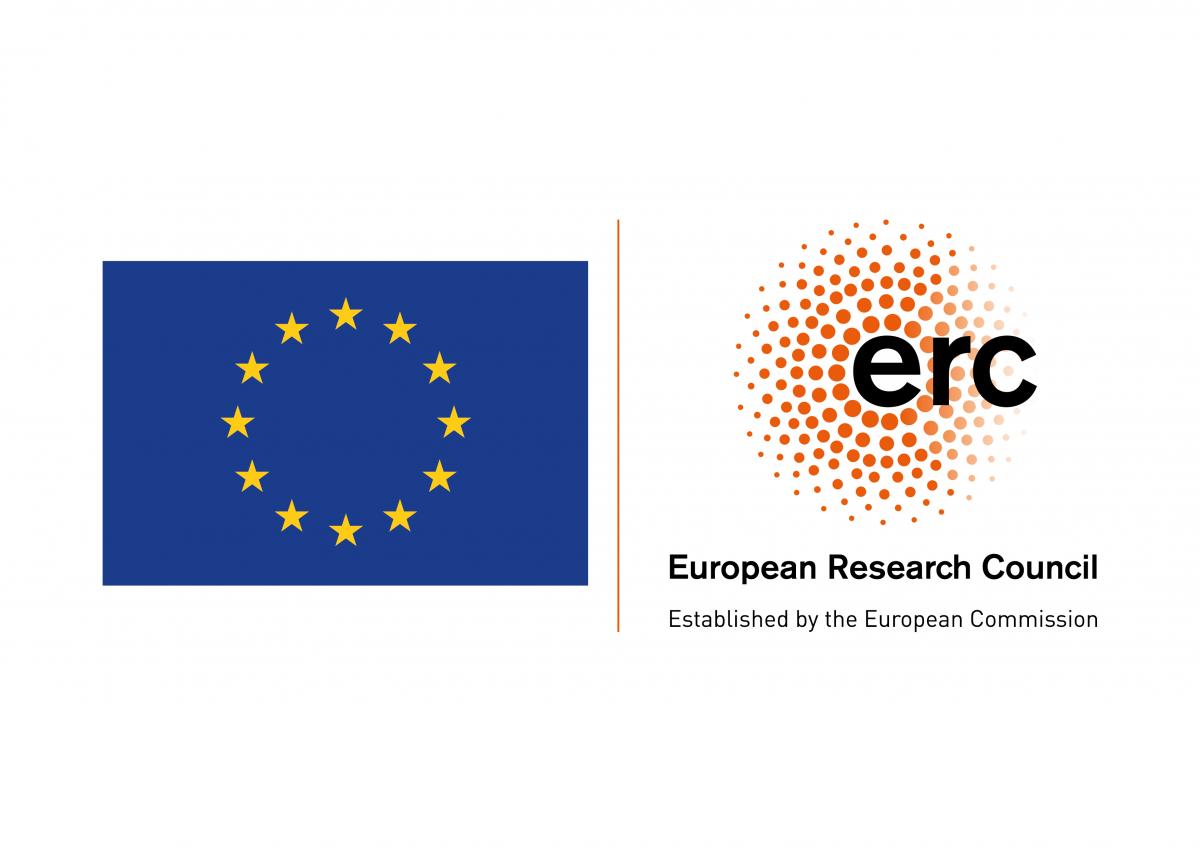}\end{center}

\bibliography{bibliography}
\bibliographystyle{plainnat}

\clearpage

\appendix
\section*{Appendix}

Throughout the appendix, for $x \in \mathbb{R}^N$ and $S \subseteq N$, we will write $x(S)$ to denote $\sum_{i \in S} x_i$.

\section{Supplementary Materials for \Cref{section:convex}}

This section contains missing proofs of the results from Section 4.

\begin{proof}[Proof of \Cref{lemma:reduced_game}]
First, let us consider size-boundness. 
For $S = N \setminus \{i\}$ we have $v(N \setminus \{i\}) = v(N) - c < |N|-1$ from the assumption that $c \ge 1$.
For any other non-empty coalition $S \subsetneq N \setminus \{i\}$, from the definition of a reduced game, we have $\Psi^{i \rightarrow c}_v(S) = \max\{v(S), v(S \cup \{i\})-c\} < \max \{ |S|, (|S|+1)-c\} = |S|$.
Hence, $\Psi^{i \rightarrow c}_v$ is also size-bounded.

Let us focus on convexity.
We need to prove that for an arbitrary $j \in N \setminus \{i\}$, $S \subseteq T \subseteq N \setminus \{i,j\}$:
\begin{equation}
\label{eq:proof_reduced_convex}
\Psi^{i \rightarrow c}_{v}(S\cup\{j\})-\Psi^{i \rightarrow c}_{v}(S) \leq \Psi^{i \rightarrow c}_{v}(T\cup \{j\})-\Psi^{i \rightarrow c}_{v}(T).
\end{equation}
Notice that for every $R \subseteq N \setminus \{i\}$ we have:
\[ \Psi^{i \rightarrow c}_{v}(R) = \begin{cases}
v(R \cup \{i\})-c & \text{if }v(R \cup \{i\})-v(R) > c,\\
v(R) & \text{otherwise.}
\end{cases} \]
For $R \not \in \{\emptyset, N \setminus \{i\}\}$ this is clear from the definition of the reduced game.
For $R = N \setminus \{i\}$, we have $\Psi^{i \rightarrow c}_v(N) = v(N) - c$ and we assumed $v(N) - v(N \setminus\{i\}) \ge c$, hence the equality holds.
For $R = \emptyset$, we have $\Psi^{i \rightarrow c}_v(N) = 0 = v(\emptyset)$ and $v(\{i\}) - v(\emptyset) \le c$, also proving the equality.

From convexity of $v$, we know that if for some $R \subseteq N\setminus\{i\}$, $v(R\cup\{i\})-v(R) > c$, then for any $R': R\subseteq R'\subseteq N\setminus\{i\}$ we have $v(R'\cup\{i\})-v(R')>c$.
This implies that if $\Psi^{i \rightarrow c}_{v}(R) \neq v(R)$, then $\Psi^{i \rightarrow c}_{v}(R')\neq v(R')$. Therefore, condition $\Psi^{i \rightarrow c}_{v}(R) \neq v(R)$ holds for one of the following groups of coalitions:
\begin{itemize}
\item[1.]$T\cup\{j\}$
\item[2.]$T\cup\{j\}$, $T$
\item[3.]$T\cup \{j\}$, $S\cup\{j\}$
\item[4.]$T\cup\{j\}$, $T$, $S\cup\{j\}$
\item[5.]$T\cup\{j\}$, $T$, $S\cup\{j\}$, $S$
\end{itemize}
We consider each case separately:

\begin{itemize}
\item \emph{Case 1:} we have $\Psi^{i \rightarrow c}_{v} (T\cup\{j\})-\Psi^{i \rightarrow c}_{v} (T) > v(T \cup \{j\}) - v(T) \ge v(S \cup \{j\}) - v(S) = \Psi^{i \rightarrow c}_{v}(S\cup\{j\})-\Psi^{i \rightarrow c}_{v}(S)$.

\item \emph{Case 2:} we have $\Psi^{i \rightarrow c}_{v} (T\cup\{j\})-\Psi^{i \rightarrow c}_{v} (T)=v(T\cup\{i,j\})-v(T\cup\{i\})\geq v(S\cup\{j\})-v(S)=\Psi^{i \rightarrow c}_{v}(S\cup\{j\})-\Psi^{i \rightarrow c}_{v}(S)$.

\item \emph{Case 3:} we have $\Psi^{i \rightarrow c}_{v}(T\cup\{j\}) - \Psi^{i \rightarrow c}_{v}(S\cup\{j\}) = v(T \cup \{i,j\}) - v(S \cup \{i,j\}) \ge v(T) - v(S) = \Psi^{i \rightarrow c}_{v}(T)-\Psi^{i \rightarrow c}_{v}(S)$ which is equivalent to \Cref{eq:proof_reduced_convex}.

\item \emph{Case 4:} we have $\Psi^{i \rightarrow c}_{v} (T\cup\{j\})-\Psi^{i \rightarrow c}_{v} (T)=v(T\cup\{i,j\})-v(T\cup\{i\})\geq v(S\cup\{i,j\})-v(S \cup \{i\}) \ge v(S \cup \{j\}) + c - v(S \cup \{i\}) \ge v(S \cup \{j\}) - v(S)$.
Here, we know that $c \ge v(S \cup \{i\}) - v(S)$, because $\Psi^{i \rightarrow c}_{v}(S) = v(S)$.

\item \emph{Case 5:} we have $\Psi^{i \rightarrow c}_{v}(T\cup\{j\})-\Psi^{i \rightarrow c}_{v}(T) =  v(T\cup\{i,j\})-v(T\cup\{i\}) \ge v(S\cup\{i,j\})-v(S\cup\{i\}) = \Psi^{i \rightarrow c}_{v}(S\cup\{j\})-\Psi^{i \rightarrow c}_{v}(S)$.
\end{itemize}
This concludes the proof.
\end{proof}

Before we move to the proof of \Cref{lemma:in-core-iff-marg-1}, we state an additional lemma that we will use in our proofs.

\begin{lemma}
\label{lemma:core_from_reduced_game}
For every game $(N,v)$, player $i \in N$ and $c \in \mathbb{R}_{\ge 0}$, if $y \in Core(N \setminus \{i\}, \Psi^{i \rightarrow c}_{v})$, then for $x \in \mathbb{R}^N$, $x_i = c$ and $x_j = y_j$ for every $j \in N \setminus \{i\}$ it holds $x \in Core(N,v)$.
\end{lemma}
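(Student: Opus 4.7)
The plan is to verify both defining properties of the core of $(N,v)$ directly for the extended vector $x$: efficiency $\sum_{j \in N} x_j = v(N)$, and coalitional rationality $\sum_{j \in S} x_j \ge v(S)$ for every $S \subseteq N$. For efficiency, I would compute $\sum_{j \in N} x_j = c + \sum_{j \in N \setminus \{i\}} y_j$, then invoke efficiency of $y$ in the reduced game; by the dedicated branch of the definition of $\Psi^{i \rightarrow c}_v$, this gives $\sum_{j \in N \setminus \{i\}} y_j = \Psi^{i \rightarrow c}_v(N \setminus \{i\}) = v(N) - c$, and adding back $c$ recovers $v(N)$.

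For coalitional rationality I would do a case split on whether $i \in S$ and whether $S$ is one of the distinguished coalitions $\emptyset$, $\{i\}$, $N \setminus \{i\}$, $N$. The cases $S = \emptyset$ and $S = N$ are immediate from $v(\emptyset) = 0$ and from efficiency. For a generic coalition with $i \notin S$ and $\emptyset \ne S \subsetneq N \setminus \{i\}$, I would chain $\sum_{j \in S} x_j = \sum_{j \in S} y_j \ge \Psi^{i \rightarrow c}_v(S) \ge v(S)$, where the last inequality is read off the $\max$-branch of the definition of $\Psi^{i \rightarrow c}_v$. For a generic coalition with $i \in S$ and $S \ne \{i\}, N$, I would use $\sum_{j \in S} x_j = c + \sum_{j \in S \setminus \{i\}} y_j \ge c + \Psi^{i \rightarrow c}_v(S \setminus \{i\}) \ge c + (v(S) - c) = v(S)$, now reading off the alternative $v(S \cup \{i\}) - c$ branch with $S$ replaced by $S \setminus \{i\}$.

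The main obstacle, and the only delicate step, is the two boundary coalitions $S = \{i\}$ and $S = N \setminus \{i\}$, where the definition of $\Psi$ does not provide the convenient $\max$ to exploit. The required inequality reduces to $c \ge v(\{i\})$ in the first case and to $v(N) - c \ge v(N \setminus \{i\})$ in the second, which together amount to the natural side condition $v(\{i\}) \le c \le v(N) - v(N \setminus \{i\})$, precisely the range appearing in the convexity clause of \Cref{lemma:reduced_game}. In every invocation of this lemma within Section 4, the value $c \in \{0,1\}$ is chosen so that these bounds hold automatically (via size-boundedness and positive marginal contribution to the grand coalition), so the case analysis closes cleanly and the extension $x$ indeed belongs to $\mathop{Core}(N, v)$.
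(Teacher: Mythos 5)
Your proof follows essentially the same route as the paper's: efficiency via the grand-coalition branch of the reduced game, and coalitional rationality by splitting on whether $i \in S$ and reading off the inequalities $\Psi^{i \rightarrow c}_{v}(S) \ge v(S)$ and $\Psi^{i \rightarrow c}_{v}(S) \ge v(S \cup \{i\}) - c$ from the $\max$ in the definition. Where you diverge is also where you are more careful than the paper: the paper asserts both inequalities for \emph{every} $S \subseteq N \setminus \{i\}$ ``from the definition of the reduced game,'' but as you observe they can fail at the two coalitions whose values are set by dedicated branches rather than by the $\max$ --- $\Psi^{i \rightarrow c}_{v}(\emptyset) = 0 \ge v(\{i\}) - c$ needs $c \ge v(\{i\})$, and $\Psi^{i \rightarrow c}_{v}(N \setminus \{i\}) = v(N) - c \ge v(N \setminus \{i\})$ needs $c \le v(N) - v(N \setminus \{i\})$. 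Without this side condition the statement as written is in fact false: take $N = \{1,2\}$ with $v(\{1\}) = v(N) = 1$, $v(\{2\}) = 0$, $i = 1$, $c = 0$; the reduced game has core $\{(1)\}$, yet the extension $(0,1)$ violates the core constraint of $\{1\}$. Your resolution --- checking that every invocation in Section 4 uses $c \in \{0,1\}$ under exactly the hypotheses $v(\{i\}) \le c \le v(N) - v(N \setminus \{i\})$, guaranteed there by size-boundedness, non-positive singleton values, or a marginal contribution of at least $1$ --- is correct and is what the paper implicitly relies on; the cleanest repair is to add that range on $c$ to the statement of \Cref{lemma:core_from_reduced_game} itself, mirroring the convexity clause of \Cref{lemma:reduced_game}.
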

\begin{proof}[Proof of \Cref{lemma:core_from_reduced_game}]
Take any coalition $S \subseteq N \setminus \{i\}$.
Since $y$ belongs to the core of game $(N \setminus \{i\}, \Psi^{i \rightarrow c}_{v})$, we have:
\[ y(S) \ge \Psi^{i \rightarrow c}_{v}(S). \]
Moreover, from the definition of the reduced game we know that:
\[ \Psi^{i \rightarrow c}_{v}(S) \ge v(S \cup \{i\}) - c \mbox{ and }\Psi^{i \rightarrow c}_{v}(S) \ge v(S). \]
This gives $x(S) = y(S) \ge v(S)$ and 
\[ x(S \cup \{i\}) = y(S) + c \ge \Psi^{i \rightarrow c}_{v}(S) + c \ge v(S \cup \{i\}) \]
which concludes the proof that $x$ is in the core of game $(N,v)$.
\end{proof}

\begin{proof}[Proof of \Cref{lemma:in-core-iff-marg-1}]
The proof proceeds by induction on the number of players.
If $|N|$ = 1, then the statement trivially holds: $v(N) = 0$ and the core consists of one payoff vector $x = (0)$.
In what follows, assume that the statement holds for every game with less than $n$ players and fix a convex, size-bounded integer game $(N,v)$ with $|N|=n$, and a player $i \in N$.
We have two cases depending on the value of $v(N) - v(N \setminus \{i\})$.

If $v(N) - v(N \setminus \{i\}) \ge 1$, then let us consider a reduced game $(N \setminus \{i\}, \Psi^{i \rightarrow 1}_{v})$. 
Hence, from \Cref{lemma:reduced_game}, we know that this game is size-bounded and convex (note that the assumption $v(\{i\}) < 1$ follows from size-boundness).
This fact combined with the inductive assumption implies that there exists a 0-1 payoff vector $y$ which is in the core of this game. 
Hence, from \Cref{lemma:core_from_reduced_game}, a payoff vector defined as follows: $x_i = 1$ and $x_j = y_j$ for every $j \in N \setminus \{i\}$ is in the core of $(N,v)$. 

If $v(N) - v(N \setminus \{i\}) < 1$, then $x(N \setminus \{i\}) \ge v(N \setminus \{i\}) = v(N) = x(N)$ which implies $x_i=0$. It remains to argue the the core is non-empty. This however, follows from the first case: because $v(N) > 0$ for a convex game there must be a player $i$ with a positive contribution to the grand coalition and we know that there exists a 0-1 payoff vector in the core of $(N,v)$.
\end{proof}

\subsection{Proof of \Cref{theorem:convex}}

We start by proving that \Cref{alg:main} always returns an outcome that satisfies Efficiency, Lower Quota and Upper Quota.	

\begin{proposition}
\label{proposition:general_lower_upper_efficiency}
The indivisible Shapley value (\Cref{alg:main}) satisfies Efficiency, Lower Quota and Upper Quota for every game $(N,v)$.
\end{proposition}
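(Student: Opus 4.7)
The plan is to establish each of the three properties by carefully tracking how $\varphi$ and $u(M)$ evolve during \Cref{alg:main}. Lower Quota is almost immediate: line 2 sets $\varphi_i = \lfloor SV_i(N,v) \rfloor$, and the only subsequent change to $\varphi$ is the increment in line 12, so $\varphi_i \ge \lfloor SV_i(N,v) \rfloor$ is maintained throughout. For Upper Quota I would argue that each player is incremented at most once by the while loop: once player $i$ is selected in line 10, the reduction in line 11 removes it from $M$, making it ineligible for future iterations. Players with $SV_i(N,v) \in \mathbb{Z}$ are eliminated before the loop starts (lines 5--7) via $0$-reductions and thus retain $\varphi_i = \lfloor SV_i(N,v) \rfloor = \lceil SV_i(N,v) \rceil$. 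All other players satisfy $\lceil SV_i(N,v) \rceil = \lfloor SV_i(N,v) \rfloor + 1$, so gaining at most one unit from the loop cannot exceed the ceiling.

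For Efficiency I would count the iterations of the while loop. The key observation is that at the moment the loop begins, $u(M) = v(N) - \sum_{i \in N} \lfloor SV_i(N,v) \rfloor$: line 3 subtracts the sum of floors from $v(N)$; each $0$-reduction in lines 5--7 leaves the grand-coalition value unchanged because $\Psi^{i \rightarrow 0}_u(M \setminus \{i\}) = u(M) - 0$ by definition; and the flooring in line 8 has no effect on $u(M)$ since $v(N) - \sum_i \lfloor SV_i(N,v) \rfloor$ is already an integer (both terms are). Each iteration performs a $1$-reduction, which decreases $u(M)$ by exactly one, so the loop terminates after exactly $v(N) - \sum_{i \in N} \lfloor SV_i(N,v) \rfloor$ iterations. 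Summing, $\sum_i \varphi_i = \sum_i \lfloor SV_i(N,v) \rfloor + (v(N) - \sum_i \lfloor SV_i(N,v) \rfloor) = v(N)$.

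The hard part is verifying that the selection in line 10 is always feasible whenever $u(M) > 0$, so that the loop does not stall before $u(M)$ reaches zero. For the general-game clause ``$SV_i(M,u) > 0$'' feasibility is immediate: by Efficiency of the Shapley value, $\sum_{i \in M} SV_i(M,u) = u(M) > 0$, so some player has a strictly positive Shapley value in the current reduced game. The more delicate case is convex integer input, where only the first clause ``$u(M) \ge u(M \setminus \{i\}) + 1$'' is used. There I would invoke \Cref{lemma:reduced_game} inductively to argue that $(M,u)$ remains a convex integer game throughout all the reductions, verifying the hypotheses at every step---in particular, for each integer-Shapley player removed in line 6 I would check $u(\{i\}) \le 0$ and $u(M) - u(M \setminus \{i\}) \ge 0$, both of which follow from $SV$ lying in the core of the convex game $(N,v)$. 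Given convexity of $(M,u)$, the fact that $u(M) > 0 = u(\emptyset)$ forces some positive marginal along a chain $\emptyset \subsetneq \dots \subsetneq M$, and by convexity the corresponding player's marginal to $M$ is at least as large---hence at least $1$ by integrality of $u$. Carrying out this bookkeeping cleanly, while accounting for the fact that flooring in line 8 is inert on convex integer inputs, will be the most delicate technical step.
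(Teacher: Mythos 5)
Your proof is correct and follows essentially the same route as the paper's: floors give Lower Quota, removal of integer-Shapley players before the loop plus at-most-one increment per player gives Upper Quota, and Efficiency follows from tracking $u(M)$ through the reductions together with the observation that $\sum_{i \in M} SV_i(M,u) = u(M) > 0$ guarantees a selectable player in every iteration. The extra "delicate" analysis of the convex integer case is not needed for this proposition (the disjunctive condition in line 10 is always satisfiable via the Shapley-value clause); that bookkeeping belongs to the separate lemma showing the blue code is inert on convex integer inputs.
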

\begin{proof}

In line 2, we give to every player the floor of its Shapley value, i.e. $\varphi_i \leftarrow \lfloor SV_i(N, v) \rfloor$ for each $i \in N$. 
Later on, we do not decrease payoffs, so the Lower Quota is satisfied. 

Furthermore, we notice that in line 2 the game $(N, u)$ is defined as a difference between the game $(N, v)$ and the game $(N, w)$, where $w(S) = \sum_{i \in S} \lfloor SV(N,v) \rfloor $ for each $S \subseteq N$. 
Game $(N,w)$ is additive, so $SV_i(N,w) = w(\{i\}) = \lfloor SV_i(N, v) \rfloor$ for each $i \in N$. 
From Additivity we know that for every player $i \in N$ it holds $SV_i(N, u) = SV_i(N, v - w) = SV_i(N, v) - SV_i(N, w) = SV_i(N, v) - \lfloor SV_i(N,v) \rfloor$.

Now, if $SV_i(N,v) = \lfloor SV_i(N, v) \rfloor$ holds for some player, then this player is removed from the game in lines 5--7. 
Hence, Upper Quota is satisfied.

Assume otherwise that $SV_i(N,v) > \lfloor SV_i(N, v) \rfloor$.
Clearly, it holds $\lceil SV_i(N,v) \rceil = \lfloor SV_i(N, v) \rfloor+1$.
In the while loop (lines 9--13) the player can increase its payoff by $1$ and it is deleted right after, so Upper Quota is again satisfied.

It remains to prove Efficiency.
Clearly, after lines 1--4, the value of the grand coalition decreased by $\sum_{i \in N} \lfloor SV_i(N,v) \rfloor$ which equals the payoffs of players assigned so far.
In lines 5--8, based on the definition of the reduced game, the value of the grand coalition did not change.
Also, from Efficiency of the Shapley value, we know that $|M| > u(M) \ge 0$.
    
It suffices to show that in each iteration of the while loop, there is a player that can receive one item. 
Specifically, we need to show that in each iteration there is a player $i \in M$ with $SV_i(M,u) > 0$ or $u(M) \ge u(M \setminus \{i\}) + 1$. 
This follows from Efficiency of the Shapley value: there must be a player $i$ with $SV_i(N,v) > 0$.
Therefore, we distribute exactly $u(M)$ items in the while loop, so altogether $\sum_{i \in N} \varphi_i = u(M) + \sum_{i \in N} \lfloor SV(N,v) \rfloor = v(N)$, thus Efficiency is satisfied.
\end{proof}

\begin{lemma}\label{lemma:alg_before_loop}
If the input game $(N,v)$ in \Cref{alg:main} is a convex integer game, then the game $(M,u)$ obtained after lines 1--7 is a convex integer size-bounded game.
\end{lemma}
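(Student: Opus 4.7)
The plan is to handle lines 2--4 and lines 5--7 of \Cref{alg:main} separately.

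For lines 2--4, set $w(S) = \sum_{i \in S} \lfloor SV_i(N,v) \rfloor$, so the game handed to line 5 is $u_0 := v - w$ on the player set $N$. Because $w$ is additive, the marginal $u_0(S \cup \{i\}) - u_0(S)$ differs from the marginal in $v$ only by the constant $w(\{i\})$, so convexity of $v$ transfers to $u_0$; integrality is immediate. By Additivity of the Shapley value together with the identity $SV_i(N, w) = w(\{i\})$ for additive $w$, we get $SV_i(N, u_0) = SV_i(N, v) - \lfloor SV_i(N, v) \rfloor \in [0, 1)$. Since the Shapley value of a convex game lies in its core, this yields $u_0(S) \le \sum_{i \in S} SV_i(N, u_0) < |S|$ for every non-empty $S$, establishing size-boundedness.

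For lines 5--7, let $I = \{i \in N : SV_i(N, v) \in \mathbb{Z}\}$, equivalently $I = \{i : SV_i(N, u_0) = 0\}$. I proceed by induction on the set $I' \subseteq I$ of players already reduced and establish the closed form
\[
u^{(I')}(S) \;=\; \max_{T \subseteq I'} u_0(S \cup T) \qquad \text{for every } S \subseteq N \setminus I'.
\]
The inductive step is immediate from the definition of $\Psi^{j \to 0}$ except in the ``grand coalition'' slot $S = N \setminus (I' \cup \{j\})$, whose reduced value is defined by subtraction rather than via a max; consistency there reduces to $u_0(N) \ge u_0(N \setminus K)$ for every $K \subseteq I$, which is yet another instance of the core property applied to $\sum_{i \in K} SV_i(N, u_0) = 0$.

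Once the closed form is in hand, the three desired properties propagate step by step. Integrality follows because a max of integers is an integer. Size-boundedness follows from $u^{(I')}(S) \le \sum_{i \in S \cup I'} SV_i(N, u_0) = \sum_{i \in S} SV_i(N, u_0) < |S|$. Convexity is preserved across the next reduction of a player $i_k \in I \setminus I'$ by invoking \Cref{lemma:reduced_game} with $c = 0$; its two hypotheses $u^{(I')}(\{i_k\}) \le 0$ and $u^{(I')}(N \setminus I') - u^{(I')}(N \setminus (I' \cup \{i_k\})) \ge 0$ both drop out of the closed form: the first because $u^{(I')}(\{i_k\}) \le \sum_{j \in \{i_k\} \cup I'} SV_j(N, u_0) = 0$, and the second because $u^{(I')}(N \setminus I') = u_0(N)$ while $u^{(I')}(N \setminus (I' \cup \{i_k\}))$ is a maximum of $u_0(N \setminus K)$'s over $K \subseteq I$ containing $i_k$, each bounded above by $u_0(N)$.

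I expect the main obstacle to be precisely this bookkeeping around the closed form---in particular, verifying that the special ``grand coalition'' slot of the reduced game (defined by subtraction) agrees with the uniform max expression, and keeping straight that the null status we exploit throughout is always that of the original $u_0$ and not of the intermediate $u^{(I')}$, where a player's Shapley value may have drifted away from zero.
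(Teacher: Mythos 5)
Your proposal is correct and follows essentially the same route as the paper's proof: both handle lines 2--4 by subtracting the additive game $w$ and invoking the core property of the Shapley value, and both handle lines 5--7 by establishing the closed form $u^{(I')}(S) = \max_{T \subseteq I'} u_0(S \cup T)$ for the iterated $0$-reduced game and then reading off integrality, size-boundedness, and the two hypotheses of \Cref{lemma:reduced_game} (non-positive singleton value and non-negative marginal contribution to the grand coalition) from that formula. The bookkeeping you flag as the main obstacle is resolved exactly as you anticipate, so no changes are needed.
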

\begin{proof}
Since the input game $(N,v)$ is a convex integer game, the game $(M,u)$ in line 4 is also convex.
This is because the marginal contribution of each player to all coalitions decreases by the same amount, so it is still larger for larger coalitions: for every $i \in N$ and $S \subseteq T \subseteq N \setminus \{i\}$:
\begin{multline*}
u(S \cup \{i\}) - u(S) = \lfloor SV_i(N,v) \rfloor + v(S \cup \{i\}) - v(S) \le  \\
\le \lfloor SV_i(N,v) \rfloor + v(T \cup \{i\}) - v(T) = u(T \cup \{i\}) - u(T).
\end{multline*}

Let us now consider removing players with the zero Shapley value in game $(M,u)$ in lines 5--7.
Let $R$ be the set of such players and assume $(M \setminus R', u')$ is the game obtained after removing players $R' \subseteq R$.
Based on the definition of 0-reduced game we have:
\[ u'(S) = \max_{T \subseteq R'} u(S \cup T) \text{ for every }S \subsetneq M \setminus R' \]
From this observation, we get that:
\begin{itemize}
\item The grand coalition has the highest value in the game: for every coalition $S \subseteq M \setminus R'$ it holds $u'(M \setminus R') \ge u'(S)$. Since the core of $(M,u)$ is non-empty (the game is convex), we know that the grand coalition has the highest value in the game. Then, in the 0-reduced game, its value does not change and values of other coalitions cannot exceed the highest value in the game.
\item Players from $R \setminus R'$ has non-positive singleton values: for each player $i \in R \setminus R'$ it holds $u'(\{i\}) \le 0$. To see that observe that the Shapley value is in the core which implies $u(T) \le 0$ for every $T \subseteq R$.
Hence, $u'(\{i\}) = \max_{T \subseteq R'} u(T \cup \{i\}) \le 0$.
\item Game $(M \setminus R', u')$ is convex. The two above observations imply that $u'(M') - u(M' \setminus \{i\}) \ge 0 \ge u'(\{i\})$ holds for every $i \in R \setminus R'$. Hence, from \Cref{lemma:reduced_game} the 0-reduced game is also convex.

\item Game $(M \setminus R', u')$ is size-bounded. Assume it is not and $u'(S) \ge |S|$. Then, it must exists $T \subseteq R'$ such that $u(S \cup T) \ge |S|$. This, however, leads to the contradiction with the fact that the Shapley value is in the core: the Shapley value of players from $S \cup T$ equals $\sum_{j \in S \cup T} SV_i(M,u) = \sum_{j \in S} SV_i(M,u) < |S|$.

\end{itemize}

We get that after removing each player $i \in R$ the game remains convex and size-bounded. 
Hence, game after lines 5--7 is convex and size-bounded.
\end{proof}

\begin{lemma}\label{lemma:convex_lower_upper_efficiency}
If a convex integer game is the input of \Cref{alg:main}, then the code marked in blue does not affect its operation.
\end{lemma}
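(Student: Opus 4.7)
The plan is to verify that when \Cref{alg:main} is run on a convex integer game, both pieces of blue code become inert: the floor in line 8 is applied to already-integer values, and the extra disjunct ``or $SV_i(M,u) > 0$'' in line 10 is implied by the primary condition $u(M) \geq u(M \setminus \{i\}) + 1$. I would structure the argument around two invariants---integrality and convexity of $(M,u)$---tracked from line 3 through the end of the while loop.

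First I would handle integrality. After line 3, $u(S) = v(S) - \sum_{i \in S} \lfloor SV_i(N,v) \rfloor$ is an integer because $v$ is integer-valued and the floored Shapley values are integers. The loop in lines 5--7 then replaces $u$ by the $0$-reduced game, whose nontrivial entries are $\max\{u(S \cup \{i\}), u(S)\}$ and whose entries on $\emptyset$ and $M \setminus \{i\}$ are inherited directly; each of these preserves $\mathbb{Z}$-valuedness. So by the time line 8 is reached, $u(S) \in \mathbb{Z}$ for every $S \subseteq M$ and the floor is the identity. This settles the first blue modification.

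Second, I would prove by induction on the iteration number that at the start of each iteration of the while loop $(M,u)$ is a convex integer size-bounded game. The base case is exactly \Cref{lemma:alg_before_loop}. For the inductive step, the selected player satisfies $u(M) - u(M \setminus \{i\}) \geq 1$, while size-boundedness forces $u(\{i\}) < 1$ and hence, by integrality, $u(\{i\}) \leq 0 \leq 1$. Thus the hypotheses of \Cref{lemma:reduced_game} hold with $c = 1$, so the next $(M,u)$ is again convex and size-bounded; integrality carries over because $c = 1$ and $\Psi^{i \to 1}_u$ is built from subtractions and maxima of integers.

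Finally, for the line 10 change: in a convex game the marginal contribution $u(S \cup \{i\}) - u(S)$ is non-decreasing in $S$, so it is maximal at $S = M \setminus \{i\}$, and the Shapley value, being an average of these marginal contributions, satisfies $SV_i(M,u) \leq u(M) - u(M \setminus \{i\})$. Together with integrality of $u$, any player with $SV_i(M,u) > 0$ already has $u(M) - u(M \setminus \{i\}) \geq 1$; conversely the left condition trivially implies the OR-condition. So both conditions define the same subset of $M$, and the first player chosen along $\pi$ is identical in the two versions. The only place that needs care is the integrality bookkeeping across the $0$- and $1$-reductions, which reduces to the trivial fact that $\max$ and integer-$c$ reductions preserve $\mathbb{Z}$-valuedness.
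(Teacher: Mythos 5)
Your proposal is correct and follows essentially the same route as the paper: it anchors the argument on \Cref{lemma:alg_before_loop} to get a convex integer size-bounded game before line 8 (making the floor the identity), shows via convexity plus integrality that $SV_i(M,u)>0$ already implies $u(M)-u(M\setminus\{i\})\ge 1$ so the blue disjunct is redundant, and propagates the invariants through each $1$-reduction using \Cref{lemma:reduced_game}. The only cosmetic difference is that you phrase the per-iteration preservation as an explicit induction and use the bound $SV_i(M,u)\le u(M)-u(M\setminus\{i\})$, where the paper argues through ``positive marginal contribution to some coalition implies positive marginal contribution to the grand coalition''; these are interchangeable.
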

\begin{proof}
From \Cref{lemma:alg_before_loop} we know that the game $(M,u)$ obtained after lines 1--7 is a convex integer size-bounded game.
Hence, the command $u(S) \leftarrow \lfloor u(S) \rfloor$ in line 8 does not modify any values.

Now, consider the first iteration of the loop in lines 9--13.
Since the game is convex, positive Shapley value ($SV_i(M,u) > 0$) implies the player has a positive marginal contribution to some coalition which from convexity implies the marginal contribution to the grand coalition is also positive ($u(M) > u(M \setminus \{i\})$).
This, in integer games, so it is larger than or equal to $1$.
Hence, the alternative in blue does not change the condition of choosing a player.
After choosing the first player, the game is replaced by the 1-reduced game $\Psi^{i \rightarrow 1}_i$. 
From \Cref{lemma:reduced_game} we know that this game is also convex and size-bounded. 
Hence, our analysis applies to all iterations, not only the first one.
This concludes the proof.
\end{proof}

\begin{lemma}\label{lemma:isv_in_core}
For convex integer games, the indivisible Shapley value belongs to the core.
\end{lemma}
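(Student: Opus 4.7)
The plan is to prove the claim by induction on $u(M)$, the value of the grand coalition in the game maintained by \Cref{alg:main} after lines 1--7. By \Cref{lemma:alg_before_loop}, this game $(M,u)$ is convex, integer and size-bounded; and, being an indivisible game, it satisfies $u(M) \in \mathbb{N}$. The output of the algorithm can be written as $\varphi_i = \lfloor SV_i(N,v) \rfloor + y_i$ for some $y \in \{0,1\}^N$, where $y_i = 0$ for every player $i$ removed via a $0$-reduction in lines 5--7. By iteratively applying \Cref{lemma:core_from_reduced_game} with $c=0$ to these removals, it suffices to show that the restriction of $y$ to $M$ is a 0-1 vector in the core of $(M,u)$: its zero-extension then lies in the core of the game obtained just after line 3, and adding back the additive baseline $\lfloor SV_i(N,v) \rfloor$ yields $\varphi \in \mathop{Core}(N,v)$.

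I would then prove, by induction on $u(M)$, that the while loop in lines 9--13 returns such a 0-1 vector in the core of the current $(M,u)$, using \Cref{lemma:reduced_game} to maintain the invariant that the current game remains convex, integer and size-bounded. In the inductive step $u(M) \ge 1$, Efficiency of the Shapley value provides a player $i$ with $SV_i(M,u) > 0$; by convexity $u(M) - u(M \setminus \{i\}) \ge SV_i(M,u)$, and integrality strengthens this to $u(M) - u(M \setminus \{i\}) \ge 1$, so the selection in line 10 is feasible. The hypotheses of \Cref{lemma:reduced_game} for $c = 1$ then hold, namely $u(\{i\}) < 1 \le u(M) - u(M \setminus \{i\})$ (the former by size-boundedness); hence the 1-reduced game $(M \setminus \{i\}, \Psi^{i \rightarrow 1}_u)$ is again convex, integer and size-bounded, with grand value $u(M) - 1$. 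The inductive hypothesis supplies a 0-1 vector $y'$ in its core, and \Cref{lemma:core_from_reduced_game} with $c = 1$ extends $y'$ by setting $y_i = 1$ to a 0-1 vector in the core of $(M,u)$.

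The main obstacle is the base case $u(M) = 0$, where the loop is empty and the candidate vector is $y = 0$; unlike the inductive step, there is no smaller instance to reduce to. I would handle it by invoking \Cref{lemma:in-core-iff-marg-1} with any player $i \in M$ (the case $M = \emptyset$ being vacuous) to obtain some 0-1 vector in the core of $(M,u)$, and then observing that Efficiency forces its entries to sum to $u(M) = 0$, so this vector must be identically zero. This step crucially relies on the fact that $(M,u)$ is an indivisible game with $u(M) \in \mathbb{N}$ throughout, which is ensured by \Cref{lemma:alg_before_loop} together with the definitions of the $0$- and $1$-reductions used by the algorithm.
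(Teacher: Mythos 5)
Your proposal is correct and follows essentially the same route as the paper: decompose the output into the additive floor baseline, the $0$-reductions for zero-Shapley players, and the $1$-reductions in the while loop, then lift core membership back step by step via \Cref{lemma:core_from_reduced_game}, using \Cref{lemma:alg_before_loop} and \Cref{lemma:reduced_game} to maintain convexity and size-boundedness. The only difference is presentational: you phrase the while-loop argument as an explicit induction on $u(M)$ with \Cref{lemma:in-core-iff-marg-1} invoked only at the base case, whereas the paper applies it recursively throughout; the content is the same.
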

\begin{proof}
Consider the game $(N,u)$ defined in line 4 of \Cref{alg:main} as follows (note that $M=N$ at this point):
\[ u(S) = v(S) - \sum_{i \in S} \lfloor SV_i(N,v) \rfloor. \]
Since game $v$ is convex, clearly $u$ is also convex.
Moreover, if we add $(\lfloor SV_i(N,v) \rfloor)_{i \in N}$ to the payoff vector which belongs to the core of game $(N,u)$ we will obtain a payoff vector from the core of game $(N,v)$.
Hence, it remains to prove that the payoff vector added in lines 5--13 belongs to the core of game $(N,u)$.

From \ref{lemma:alg_before_loop} we know that the game $(M,u)$ obtained after lines 1--7 is a convex integer size-bounded game.
Moreover, we also know from \Cref{lemma:core_from_reduced_game} that any payoff vector that belongs to the core of $(M,u)$ extended by assigning payoff $0$ to all players from $N \setminus M$ is in the core of game $(N,u)$.

Finally, since game $(M, u)$ is convex and size-bounded, and we know that in each iteration of the while loop, player $i$ satisfies condition $u(M) - u(M \setminus \{i\}) \ge 1$, we can use Lemma~\ref{lemma:in-core-iff-marg-1} to show recursively that for any 0-1 payoff vector $y \in Core(M \setminus \{i\}, \Psi^{i \rightarrow 1}_{v})$, a payoff vector $x$ defined as $x_i = 1$ and $x_j = y_j$ for every $j \in M \setminus \{i\}$ is in the core of $(M,u)$. Therefore, a payoff vector obtained in the while loop is in the core of $(M,u)$ and thereby, in the core of $(N,u)$. Thus, indivisible Shapley value belongs to the core of $(N,v)$.
\end{proof}

Let us introduce an additional definition.
We say that a payoff vector $x$ is \emph{lexicographically larger with respect to a permutation $\pi$} if for the first player $i$ from $\pi$ for which payoff vectors differ it holds $x_i > y_i$.
Consequently, payoff vector $x$ is \emph{lexicographically maximal} (with respect to a permutation $\pi$) if there is no payoff vector $y$ such that $y$ is lexicographically larger.

\begin{lemma}
\label{lemma:lex}
Let $\pi$ be a permutation of players according to $SV_i(N,v) - \lfloor SV_i(N,v) \rfloor$ decreasingly. The indivisible Shapley value (\Cref{alg:main}) is the lexicographically maximal payoff with respect to permutation $\pi$ among all payoff vectors that satisfy Efficiency, Lower Quota and Upper Quota, and belong to the core.
\end{lemma}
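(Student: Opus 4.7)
The approach is to reformulate the lemma as a statement about $0$-$1$ vectors in the reduced game $(M_0, u_0')$ obtained after line 7 of \Cref{alg:main}, and then to derive a contradiction from the assumption that a lex-larger competitor exists. First, I would translate the problem: for any feasible $y$, Lower and Upper Quota force $y_i \in \{\lfloor SV_i(N,v)\rfloor, \lceil SV_i(N,v)\rceil\}$, with equality $y_i = SV_i(N,v)$ whenever $SV_i(N,v) \in \mathbb{Z}$. Subtracting the additive game of floors preserves the core, and applying the converse direction of \Cref{lemma:core_from_reduced_game} (easily verified by a direct computation) iteratively to each integer-$SV$ player with $c = 0$ yields $\mathbf{1}_{T_y} \in Core(M_0, u_0')$ with $T_y \subseteq M_0$ and $|T_y| = u_0'(M_0)$. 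Since integer-$SV$ players have fractional part $0$ and therefore sit at the tail of $\pi$, and both $y$ and ISV assign them the same value, the lex comparison reduces to comparing $T_y$ with the set $T^* = \{i_1, \ldots, i_k\}$ chosen by the while loop on $(M_0, u_0')$, which by \Cref{lemma:alg_before_loop} is convex, integer, and size-bounded.

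The technical heart is a monotonicity property for $1$-reductions: a direct computation with the definition of $\Psi^{j\rightarrow 1}_u$ gives
\[
\Psi^{j\rightarrow 1}_u(M \setminus \{j\}) - \Psi^{j\rightarrow 1}_u(M \setminus \{i, j\}) \le u(M) - u(M \setminus \{i\})
\]
for every $i \ne j$, so marginal contributions to the grand coalition can only decrease under $1$-reductions. Consequently, once a player's marginal falls below $1$ it stays below $1$, and the algorithm's selections are strictly increasing in $\pi$: if $\pi(i_{t+1}) < \pi(i_t)$ then $i_{t+1}$ was already in $M_{t-1}$ and had marginal $< 1$ there (otherwise it would have been chosen instead of $i_t$ at step $t$), so its marginal in $(M_t, u_t)$ is also $< 1$, contradicting its selection at step $t+1$.

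For the final contradiction, suppose $T_y$ is lex-larger than $T^*$ and let $i$ be the first player at which the indicators differ; then $i \in T_y \setminus T^*$ and the two sets coincide on $\{j : \pi(j) < \pi(i)\}$. The edge case $T^* \subseteq \{j : \pi(j) < \pi(i)\}$ is ruled out because it would force $T_y \supseteq T^* \cup \{i\}$ and hence $|T_y| > |T^*|$, while both cardinalities equal $u_0'(M_0)$. Using the strict $\pi$-ordering of the $i_t$, there is a unique $t$ with $\pi(i_{t-1}) < \pi(i) < \pi(i_t)$; since $i \in M_{t-1}$ and $\pi(i) < \pi(i_t)$, the algorithm skipped $i$ at step $t$, giving $u_{t-1}(M_{t-1}) - u_{t-1}(M_{t-1} \setminus \{i\}) = 0$. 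On the other hand, $T_y \supseteq \{i_1, \ldots, i_{t-1}\}$ (they lie before $i$ in $\pi$ and $T_y$ agrees with $T^*$ there), so iterating the converse of \Cref{lemma:core_from_reduced_game} with $c = 1$ across the $1$-reductions of $i_1, \ldots, i_{t-1}$ shows that $\mathbf{1}_{T_y \setminus \{i_1,\ldots,i_{t-1}\}}$ lies in $Core(M_{t-1}, u_{t-1})$ with $i$ receiving payoff $1$. Efficiency on $M_{t-1}$ together with the core inequality at $M_{t-1} \setminus \{i\}$ then forces $u_{t-1}(M_{t-1}) - u_{t-1}(M_{t-1} \setminus \{i\}) \ge 1$, contradicting the vanishing marginal. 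The main obstacle is the monotonicity step that yields the strict $\pi$-ordering of picks; without it, there would be no canonical step $t$ at which to localize the contradiction, and the exchange would require a substantially more intricate invariant to align with $\pi$.
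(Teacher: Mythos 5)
Your proof is correct and follows essentially the same strategy as the paper's: locate the first player $i$ (in $\pi$-order) where the competitor exceeds the ISV, observe that the algorithm must have skipped $i$ because its marginal contribution to the grand coalition in the current reduced game was below $1$, and contradict this with the core condition that the competitor's restriction would have to satisfy in that reduced game. You additionally make explicit two facts the paper's terser argument leaves implicit --- that marginal contributions to the grand coalition are non-increasing under $1$-reductions (so the while-loop's selections are $\pi$-monotone) and the converse direction of \Cref{lemma:core_from_reduced_game} needed to push the competitor down to the reduced game --- which strengthens rather than changes the argument.
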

\begin{proof}
Assume that the indivisible Shapley value $x$ is not the lexicographically maximal.
Let $y$ be a payoff vector lexicographically larger and take the first player $i$ according to the permutation $\pi$ on which they differ, i.e., $x_i < y_i$.
Since both payoff vectors satisfy Lower Quota and Upper Quota is must hold that $x_i = \lfloor SV_i(N,v) \rfloor < SV_i(N,v)$ which means that at some point in line 10 of \Cref{alg:main} player $i$ was not selected and some player, further in the permutation, was selected.
This means that at this moment it was true that $u(M) = u(M \setminus \{i\})$.
But \Cref{lemma:in-core-iff-marg-1} implies that the core of game $(M,u)$ does not contain any payoff vector in which player $i$ has $1$ which leads to the contradiction with the claim that $y$ belongs to the core.
\end{proof}

\begin{lemma}
\label{lemma:convexintersection}
For every convex integer game $(N,v)$, payoff vector $x \in Core(N,v)$, and coalitions $S, T \subseteq N$ such that $x(S) = v(S)$ and $x(T) = v(T)$ it holds $x(S \cup T) = v(S \cup T)$ and $x(S \cap T) = v(S \cap T)$.
\end{lemma}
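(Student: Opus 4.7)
The plan is to combine three facts: the supermodularity of convex games, the modularity of the additive payoff function $x(\cdot)$, and the core inequalities. Specifically, for any $S, T \subseteq N$ I will use
\[ v(S \cup T) + v(S \cap T) \ge v(S) + v(T), \qquad x(S \cup T) + x(S \cap T) = x(S) + x(T), \]
\[ x(S \cup T) \ge v(S \cup T), \qquad x(S \cap T) \ge v(S \cap T), \]
and then chain them to force equality everywhere.

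First I would establish the supermodular inequality from the local definition of convexity given in the preliminaries. Writing $T \setminus S = \{j_1, \dots, j_k\}$, I would telescope the sum
\[ v(S \cup T) - v(S) = \sum_{\ell=1}^{k} \bigl(v(S \cup \{j_1, \dots, j_\ell\}) - v(S \cup \{j_1, \dots, j_{\ell-1}\})\bigr), \]
\[ v(T) - v(S \cap T) = \sum_{\ell=1}^{k} \bigl(v((S \cap T) \cup \{j_1, \dots, j_\ell\}) - v((S \cap T) \cup \{j_1, \dots, j_{\ell-1}\})\bigr). \]
Since $(S \cap T) \cup \{j_1, \dots, j_{\ell-1}\} \subseteq S \cup \{j_1, \dots, j_{\ell-1}\}$ (both contain the same elements of $T$ added so far, but the latter additionally contains $S \setminus T$), the local convexity condition gives that each summand in the first sum is at least the corresponding summand in the second. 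Hence $v(S \cup T) - v(S) \ge v(T) - v(S \cap T)$, which is exactly supermodularity.

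Next I would plug everything together. Using the hypothesis $x(S) = v(S)$ and $x(T) = v(T)$ together with modularity of $x$ and the two core inequalities for $S \cup T$ and $S \cap T$, I get
\[ v(S \cup T) + v(S \cap T) \;\le\; x(S \cup T) + x(S \cap T) \;=\; x(S) + x(T) \;=\; v(S) + v(T) \;\le\; v(S \cup T) + v(S \cap T), \]
where the last inequality is supermodularity. Therefore every inequality is an equality, and since individually $x(S \cup T) \ge v(S \cup T)$ and $x(S \cap T) \ge v(S \cap T)$, both must in fact be equalities, which is the claim.

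The only real obstacle is the supermodularity step, since the preliminaries state only the local ``one element at a time'' version of convexity; but as sketched above this upgrades to the global inequality by a routine telescoping argument, after which the rest is a one-line chain of inequalities.
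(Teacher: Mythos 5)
Your proof is correct and follows essentially the same route as the paper's: combine modularity of $x$, the core inequalities for $S \cup T$ and $S \cap T$, and supermodularity of $v$ into a single chain that forces equality everywhere. The only difference is that you explicitly derive supermodularity from the local convexity condition by telescoping, whereas the paper invokes it directly; your telescoping argument is correct.
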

\begin{proof}
We know that 
\[ x(S) + x(T) = x(S \cup T) + x(S \cap T). \] 
On the other hand, from convexity of $(N,v)$ it holds that 
\[ v(S) + v(T) \le v(S \cup T) + v(S \cap T). \] 
From the assumption that $x(S) = v(S)$ and $x(T) = v(T)$ we get that:
\[ x(S \cup T) + x(S \cap T) \le v(S \cup T) + v(S \cap T). \] 
Since $x$ belongs to the core, we also know that $x(S \cup T) \ge v(S \cup T)$ and $x(S \cap T) \ge v(S \cap T)$, which combined with the above inequality implies that $x(S \cup T) = v(S \cup T)$ and $x(S \cap T) = v(S \cap T)$.
\end{proof}

\begin{lemma}
\label{lemma:swapping1}
For a convex integer game $(N,v)$, two payoff vectors $x,y \in \{0,1\}^N \cap Core(N,v)$ and player $i \in N$, if $x_i = 1$ and $y_i = 0$, then there exists a player $j$ such that $x_j = 0$ and $y_j = 1$ and the payoff vector obtained by changing $x_i$ to $0$ and $x_j$ to $1$ belongs to the core of $(N,v)$.
\end{lemma}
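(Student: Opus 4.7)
The plan is to pinpoint the constraints that a valid swap partner $j$ must satisfy, and then use the structure of tight coalitions from \Cref{lemma:convexintersection} to locate such a player.

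First, I would examine when the modified vector $x'$ (with $x'_i = 0$, $x'_j = 1$, and $x'_k = x_k$ otherwise) lies in the core. For any $S \subseteq N$, $x'(S) = x(S)$ unless exactly one of $i,j$ lies in $S$. If $j \in S$ and $i \notin S$, then $x'(S) = x(S) + 1 \geq v(S)$ trivially. The only binding case is $i \in S$, $j \notin S$, where $x'(S) = x(S) - 1$; we then need $x(S) \geq v(S) + 1$. Equivalently, every \emph{tight} coalition $S$ (one with $x(S) = v(S)$) that contains $i$ must also contain $j$.

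Next, let $\mathcal{T} = \{S \subseteq N : i \in S,\ x(S) = v(S)\}$. Efficiency gives $x(N) = v(N)$, so $N \in \mathcal{T}$ and $\mathcal{T}$ is non-empty. By \Cref{lemma:convexintersection}, the family of tight coalitions of $x$ is closed under intersection, so $T^{\ast} := \bigcap_{S \in \mathcal{T}} S$ is itself tight and contains $i$. Thus finding a valid swap partner reduces to finding $j \in T^{\ast} \setminus \{i\}$ with $x_j = 0$ and $y_j = 1$.

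Finally, I would compare $x$ and $y$ restricted to $T^{\ast}$. Because $y \in Core(N,v)$, we have $y(T^{\ast}) \geq v(T^{\ast}) = x(T^{\ast})$, so $\sum_{k \in T^{\ast}}(y_k - x_k) \geq 0$. Since $y_i - x_i = -1$, some $j \in T^{\ast} \setminus \{i\}$ must satisfy $y_j - x_j \geq 1$, and with $\{0,1\}$-valued payoffs this forces $x_j = 0$ and $y_j = 1$. By construction $j$ belongs to every coalition in $\mathcal{T}$, so the swapped vector discharges all the constraints identified in the first step and therefore lies in $Core(N,v)$.

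The main obstacle is identifying the right coalition on which to run the counting argument: without \Cref{lemma:convexintersection}, the constraints on $j$ arise from potentially many incomparable tight coalitions containing $i$, and it is not obvious that a single $j$ can satisfy all of them simultaneously. Convexity-driven closure of tight coalitions under intersection is precisely what collapses these constraints into membership in the single set $T^{\ast}$, reducing the existence of $j$ to a one-line cardinality comparison on $T^{\ast}$.
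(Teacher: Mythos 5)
Your proposal is correct and follows essentially the same route as the paper's proof: both reduce the problem to the intersection of all tight coalitions containing $i$ (tight via \Cref{lemma:convexintersection}) and then use the core constraint of $y$ on that intersection to locate a player $j$ with $x_j=0$, $y_j=1$. The only difference is presentational --- you argue directly where the paper argues by contradiction.
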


\begin{proof}
Let us consider two payoff vectors $x,y$ as described in the lemma statement. 
Let $i \in N$ by any player so that $x_i = 1$ and $y_i = 0$. 
By contradiction, we assume that there is no player $j$ such that $x_j=0$, $y_j=1$ and at the same time payoff vector obtained by changing $x_i$ to $0$ and $x_j$ to $1$ belongs to the core of $(N,v)$.

Let $T$ denote a set of players $k \in N$ such that $x_k = 0$ and $y_k = 1$. 
Let $S_1, S_2, ..., S_m$ be a list of all coalitions that contain player $i$ such that $x(S_{\ell}) = v(S_{\ell})$. 
These are the only coalitions for which decreasing the payoff of $x_i$ by one may violate their core condition $x(S) \geq v(S)$.
Note that this list contains at least the grand coalition $N$.
Let $S_{\cap} = \bigcap_{{\ell}=1}^{m} S_{\ell}$ be the intersections of all these coalitions. 
From Lemma~\ref{lemma:convexintersection} we know that $v(S_{\cap}) = x(S_{\cap})$. 
Now, if a player $j$ from $T$ belongs to $S_{\cap}$, then the payoff vector obtained by changing $x_i$ to $0$ and $x_j$ to $1$ would also be in the core which is a contradiction.
On the other hand, if $T \cap S_{\cap} = \emptyset$, then together with fact that $i \in S_{\cap}$, we notice that $y(S_{\cap}) < x(S_{\cap}) = v(S_{\cap})$, thus $y$ is not in the core which is also a contradiction.
\end{proof}

\begin{proof}[Proof of \Cref{theorem:convex}]
It is clear from \Cref{alg:main} that the indivisible Shapley value is an indivisible payoff vector.
We already argued in \Cref{proposition:general_lower_upper_efficiency,lemma:convex_lower_upper_efficiency} that it satisfies Efficiency, Lower Quota and Upper Quota.
Also, we proved in \Cref{lemma:isv_in_core} that the indivisible Shapley value belongs to the core.
Thus, it remains to prove that the indivisible Shapley value has a minimal distance to the Shapley value out of all payoff vectors satisfying the above conditions.
More precisely, we will prove that for any $p$ the indivisible Shapley value $x$ satisfies
\[ x = \arg \min_{y} \{ ||(SV_i(N,v))_{i \in N} - y||_p : y \mbox{ satisf. (a)-(c)}\} \] 
Since $y$ satisfies Lower Quota and Upper Quota, it is enough to focus on games in which $SV_i(N,v) \in [0,1)$ holds for every $i \in N$ and $y \in \{0,1\}^{N}$.

Let $\pi$ be a permutation of players according to $SV_i(N,v) - \lfloor SV_i(N,v) \rfloor$ decreasingly.
By contradiction, assume the indivisible Shapley value $x$ does not minimize the $L^p$-distance to the Shapley Value.
Let $y$ be a payoff vector that satisfies all conditions and minimizes this distance, and among such payoff vectors it is lexicographically maximal (with respect to $\pi$).
Let $i$ be the first player, according to $\pi$, on which vectors $x$ and $y$ differ.
Since we proved in \Cref{lemma:lex} that $x$ is the lexicographically maximal we have $x_i=1$ and $y_i=0$.
Furthermore, we know from \Cref{lemma:swapping1} that there exists a position $j$ after $i$ ($\pi(i) < \pi(j)$), such that $x_j = 0$ and $y_j = 1$ and the payoff vector $y'$ obtained from $y$ by changing $y_i$ to $1$ and $y_j$ to $0$ is in the core.

We have: 
\begin{multline*} 
||SV(N,v) - y'||_p = \left(\sum_{{\ell} \in N} (|SV_{\ell}(N,v) - y'_{\ell}|^p)\right)^{\frac{1}{p}} = \\
= \left(\sum_{{\ell} \in N\setminus \{i,j\}} \left(|SV_{\ell}(N,v) - y'_{\ell}|^p) + |SV_i(N,v) - y'_i|^p + |SV_j(N,v) - y'_j|^p \right)\right)^{\frac{1}{p}}.
\end{multline*}
Likewise: 
\[ ||SV(N,v) - y||_p =
\left(\sum_{{\ell} \in N\setminus \{i,j\}} \left(|SV_{\ell}(N,v) - y_{\ell}|^p) + |SV_i(N,v) - y_i|^p + |SV_j(N,v) - y_j|^p \right)\right)^{\frac{1}{p}}. \] 
Since, $\pi$ is ordered by $SV_{\ell}(N,v) - \lfloor SV_{\ell}(N,v) \rfloor$ decreasingly and $\lfloor SV_{\ell}(N,v) \rfloor = 0$ for every player ${\ell} \in N$, because $SV_{\ell}(N,v) \in [0,1)$, we have that $SV_i(N,v) \ge SV_j(N,v)$. Therefore, $|SV_i(N,v) - y'_i|^p \le |SV_j(N,v) - y_j|^p$, because $y'_i = 1$ and $y_j = 1$, and $|SV_j(N,v) - y'_j|^p \le |SV_i(N,v) - y_i|^p$, because $y'_j = 0$ and $y_i = 0$. 
Thus, $||SV(N,v) - y'||_p \le ||SV(N,v) - y||_p$ and we have a contradiction, because $y'$ is lexicographically larger than $y$ and $||SV(N,v) - y'||_p \le ||SV(N,v) - y||_p$.
\end{proof}

\section{Supplementary Materials for \Cref{section:general_games}}\label{appendix:general_games}

This section contains the full description of the extension of the Indivisible Shapley Value to general games.

\subsection{Indivisible Shapley Value for general games}

Let us discuss how to generalize our method for all games.
We begin with general convex fractional games.  
As we have argued, the core of such games is non-empty and contains the Shapley value.  
However, in contrast to convex integer games, it might happen that no indivisible payoff vector lies in the core.  
In fact, indivisible payoff vectors might be arbitrarily far from the core.

\begin{example}
\emph{(Fractional-Half-Game)}  
Consider an integer indivisible game with $N = \{1, \dots, n\}$, where $n$ is even, and $v(S) = |S|/2$.  
We have $v(N) = n/2$, and clearly $x = [0.5, \dots, 0.5]$ is the only payoff vector in the core.  
However, in every indivisible payoff vector, there exist at least $n/2$ players with a payoff of $0$.  
Thus, for the coalition $S$ of these players, we have $\sum_{i \in S} x_i \!=\! 0 \!<\! n/4 \!=\! v(S)$.  
\hfill $\lrcorner$  
\end{example}

To adapt our method to general fractional games, we introduce two changes (marked in blue in \Cref{alg:main}).  
First, after assigning the floor of the Shapley value to all players and removing players with zero Shapley value from the resulting game, we round down all the values in the game (line 8).  
We round values down because we want to assess coalitions based on what they are entitled to.  
In indivisible games, payoffs are whole numbers, so assuming that a coalition with the value $x$ is entitled to $x$ is equivalent to assuming it is entitled to $\lceil x \rceil$, which might unfairly prioritize coalitions with very small values.  
This issue arises, in particular, in games with a large number of players and small coalition values, such as the feature-games derived from machine learning models.  

Second, we modify the condition used to select the player who receives an additional $1$ in the final loop (line 10).  
In convex games, we selected a player with a marginal contribution to the grand coalition larger than or equal to $1$.  
However, in general games, such a player might not exist.  
Thus, we select a player whose marginal contribution is at least $1$ or who has a positive Shapley value in the current game.  
Note that since Shapley values always sum to the value of the grand coalition, such a player always exist.  

We note that both modifications do not change how the procedure works in the case of convex integer games.  
Clearly, rounding down the values does not affect integer values.  
Moreover, if a player's marginal contribution to the grand coalition is not positive, then by convexity, all marginal contributions are non-positive.  
Hence, the Shapley value does not exceed $0$.  
This means no player can satisfy $SV_i(M,u)>0$ without also satisfying $v(M) \ge v(M \setminus \{i\}) + 1$.  In \Cref{proposition:general_lower_upper_efficiency} we proved that \Cref{alg:main} in general games also satisfies Efficiency, Lower Quota and Upper Quota.

\subsection{ISV-Approach for Large Games}
The indivisible Shapley value operates under the assumption that the full coalitional game is given as input. However, since the size of the game grows exponentially with respect to the number of players, this assumption is not realistic. In this section, we present an algorithm that simulate the behavior of the indivisible Shapley value: approximates the Shapley value and then operates in polynomial time to determine the final outcome.

\begin{algorithm}[t!]
   \caption{ISV-Approach for Large Games}
   \label{alg:heuristic}
   
\begin{algorithmic}[1]
\REQUIRE An indivisible game \((N,v)\), where \(v\) is a black box
\ENSURE A payoff vector \(x \in \mathbb{Z}^N\)
   \STATE \(x_i \leftarrow 0\) for every \(i \in N\)
   \STATE \((\varphi_i)_{i \in N} \leftarrow\) approx. of the Shapley values
   \STATE \((\varphi_{ij})_{i,j \in N} \leftarrow\) approx. of the Shapley value matrix
   \WHILE{\(v(N) > \sum_{i \in N} x_i\)}
      \STATE \(i \leftarrow\) player with the highest value \(\varphi_i\)
      \STATE \(x_i \leftarrow x_i + 1\)
      \IF{\(\varphi_i > 1\)}
         \STATE \(\varphi_i \leftarrow \varphi_i - 1\)
      \ELSE
      	\STATE \(\varphi_j \leftarrow \varphi_j - (1 - \varphi_i) \left(\frac{\alpha \cdot \varphi_{ij}}{{\sum_{k \in N \setminus \{i\}} \varphi_{ik}}} + \frac{1-\alpha}{n-1}\right)\) for every \(j \in N \setminus \{i\}\)
      	\STATE \(\varphi_i \leftarrow 0\)
      \ENDIF
   \ENDWHILE
   \STATE {\bfseries return} \(x\)
\end{algorithmic}
\end{algorithm}

We utilize the notion of the Shapley value matrix $(SV_{ij}(N,v))_{i,j \in N}$~\cite{Hausken:Mohr:2001}. 
In this matrix, the Shapley value of a player is decomposed into a vector $(SV_{i1}(N,v), \dots, SV_{in}(N,v))$, where each element represents the synergy between two players: the contribution of player \(j\) to the value of player \(i\):
\begin{equation}\label{eq:shapley_matrix}
SV_{ij}(N,v) = \sum_{i,j \in S \subseteq N} \frac{\Delta_v(S)}{|S|^2}.
\end{equation}
Clearly, \(SV_i(N,v) = \sum_{j \in N} SV_{ij}(N,v)\).
To avoid exponential blow-up, we rely only on the Shapley values and the Shapley value matrix as sources of information about the game and the dependencies between players.

Our procedure operates as follows. 
First, we approximate the Shapley values and the Shapley value matrix by sampling random permutations.
Then, in a loop, we select the player with the highest Shapley value and add 1 to this player's payoff.
To account for this increase, we decrease the player's Shapley value by 1.
However, if a player’s Shapley value is less than 1, we also reduce the Shapley values of other players—primarily those who exhibit synergy with that player. Specifically, if the Shapley value of player $i$ is $\varphi_i < 1$, we decrease the Shapley values of the other players by a total of $(1 - \varphi_i)$. A fraction $\alpha$ of this amount is distributed proportionally to the synergy values $SV_{ij}(N, v)$, and the remaining fraction $(1 - \alpha)$ is distributed equally among them.

\Cref{alg:heuristic} outlines our procedure.
Note that all players will receive their lower quota before we select a player with the Shapley value smaller than 1. 
The subsequent steps mimic the behavior of the indivisible Shapley value: whenever a payoff of 1 is assigned to a player, its contribution to others is removed from the game.
In our algorithm for large games, this contribution is subtracted from the Shapley values of others players. 

To approximate the Shapley value, we use a standard method based on sampling of permutations can be used~\cite{Calvo:etal:1999}.
However, no method exists in the literature to approximate the Shapley value matrix. 
To this end, we propose such a method, depicted in \Cref{alg:matrix_approx}, along with the proof of correctness in \Cref{prop:matrix_approx}.

\begin{algorithm}[t!]
\caption{Approximation of the Shapley value matrix}
\label{alg:matrix_approx}
   
\begin{algorithmic}[1]
\REQUIRE A coalition game $(N,v)$, where $v$ is a black box, number of samples $k$
\ENSURE The Shapley value matrix $SV_{ij}(N,v)$
   \STATE $(\varphi_{ij})_{i,j \in N} \leftarrow 0$ for every $i,j \in N$
   \FOR{$k$ random permutations $\pi$}
      \FOR{$i \in N$}
          \FOR{$j \in S^{\pi}_i$ s.t. $i \le j$}
             \STATE $\varphi_{ij} \leftarrow \varphi_{ij} + \frac{1}{k} (v(S^{\pi}_i \cup \{i\}) - v(S^{\pi}_i) - v(S^{\pi}_i \setminus \{j\} \cup \{i\}) + v(S^{\pi}_i \setminus \{j\})) \cdot \sum_{t=|S^{\pi}_i|+1}^{|N|} \frac{1}{t}$
          \ENDFOR 
      \ENDFOR
   \ENDFOR
   \FOR{$i, j \in N$ s.t. $i < j$}
      \STATE $\varphi_{ji} \leftarrow \varphi_{ij}$
   \ENDFOR
   \STATE {\bfseries return} $(\varphi_{ij})_{i,j \in N}$
\end{algorithmic}
\end{algorithm}

\begin{proposition}\label{prop:matrix_approx}
\Cref{alg:matrix_approx} approximates the Shapley value matrix.
\end{proposition}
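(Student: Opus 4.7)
The plan is to show that the algorithm's output is an unbiased estimator of $SV_{ij}(N,v)$ for each pair $i\neq j$; the approximation guarantee then follows, since $\varphi_{ij}$ is the empirical mean of $k$ i.i.d.\ copies of a bounded statistic, so the strong law of large numbers (or a Hoeffding bound for a quantitative rate) gives $\varphi_{ij}\to SV_{ij}(N,v)$. Writing $\delta_{ij}(R)=v(R\cup\{i,j\})-v(R\cup\{i\})-v(R\cup\{j\})+v(R)$ for the second-order marginal contribution and $h(a)=\sum_{t=a+1}^{n}1/t$, the contribution of a single random permutation $\pi$ to $\varphi_{ij}$ (when $i\le j$) is $\mathbb{1}[j\in S^\pi_i]\cdot\delta_{ij}(S^\pi_i\setminus\{j\})\cdot h(|S^\pi_i|)$, and the key claim is
\[
\mathbb{E}_\pi\!\left[\mathbb{1}[j\in S^\pi_i]\,\delta_{ij}(S^\pi_i\setminus\{j\})\,h(|S^\pi_i|)\right] \;=\; SV_{ij}(N,v).
\]
Because $\delta_{ij}$ is symmetric in its two indices, the symmetrization $\varphi_{ji}\leftarrow\varphi_{ij}$ at the end of the algorithm is consistent, and it suffices to sample for a single ordering $i\le j$.

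I would prove this identity by reducing to unanimity games. Both sides are linear in $v$, so the Harsanyi decomposition $v=\sum_T\Delta_v(T)u_T$ reduces the problem to showing that the expectation equals $\mathbb{1}[i,j\in T]/|T|^2 = SV_{ij}(N,u_T)$ for every unanimity game. A short case analysis, using the fact that $R\subseteq N\setminus\{i,j\}$ forces three of the four terms in $\delta_{ij}(R;u_T)$ to vanish, shows that $\delta_{ij}(R;u_T)=0$ unless $i,j\in T$, and otherwise $\delta_{ij}(R;u_T)=\mathbb{1}[T\setminus\{i,j\}\subseteq R]$. Combining this with $P[S^\pi_i=R\cup\{j\}]=(|R|+1)!(n-|R|-2)!/n!$ and counting the sets $R\supseteq T\setminus\{i,j\}$ by size, the claim reduces, with $m=|T|-2$ and $r=|R|$, to the combinatorial identity
\[
\sum_{r=m}^{n-2}\binom{n-m-2}{r-m}\,\frac{(r+1)!(n-r-2)!}{n!}\,h(r+1) \;=\; \frac{1}{(m+2)^2}.
\]

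The main obstacle is establishing this identity. The plan is to expand $h(r+1)=\sum_{t=r+2}^{n}1/t$ and swap the order of summation so that $t$ is outer. Substituting $k=r-m$ and rewriting $(m+k+1)!/k! = (m+1)!\binom{m+k+1}{m+1}$, the inner $r$-sum collapses by the hockey-stick identity to $(m+1)!\binom{t}{m+2}$. Combining with the factor $1/t$ and rewriting as $(m+1)!\binom{t-1}{m+1}$ up to constants, a second application of hockey-stick gives $\sum_{t=m+2}^{n}\binom{t-1}{m+1}=\binom{n}{m+2}$. Cancelling factorials against the prefactor $(n-m-2)!/n!$ then leaves exactly $1/(m+2)^2 = 1/|T|^2$, matching $SV_{ij}(N,u_T)$. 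Unbiasedness for general $v$ follows by linearity, and the convergence claim of the proposition follows as described above.
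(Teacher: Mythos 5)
Your proposal is correct, but it takes a genuinely different route from the paper's. The paper begins by quoting an alternative closed form for $SV_{ij}(N,v)$ in terms of second-order marginal contributions, namely $\sum_{S\subseteq N}\frac{(|S|-1)!(|N|-|S|)!}{|N|!}\bigl(v(S)-v(S\setminus\{i\})-v(S\setminus\{j\})+v(S\setminus\{i,j\})\bigr)\sum_{t=|S|}^{|N|}\frac{1}{t}$, attributed to Hausken and Mohr; with that formula in hand it only remains to recognize the combinatorial coefficient as the probability that $S\setminus\{i\}$ is exactly the prefix preceding $i$ in a uniformly random permutation, and to note that the second-order difference vanishes when $j\notin S^{\pi}_i$, which justifies the restricted inner loop. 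You instead prove unbiasedness directly against the dividend-based definition $SV_{ij}(N,v)=\sum_{i,j\in S}\Delta_v(S)/|S|^2$ from the paper's Equation~(1): linearity reduces to unanimity games, where $\delta_{ij}(R;u_T)$ is an indicator of $T\setminus\{i,j\}\subseteq R$, and the resulting sum collapses to $1/|T|^2$ via two applications of the hockey-stick identity. I verified your identity (it checks out for small $n$ and the telescoping argument goes through in general), so the argument is sound; in effect you have reproved the Hausken--Mohr equivalence rather than citing it, which makes your proof self-contained at the cost of the combinatorial computation, whereas the paper's version is shorter but leans on the external reference. One point worth flagging, shared by both treatments: since $i\notin S^{\pi}_i$, the algorithm never updates the diagonal entries $\varphi_{ii}$, so strictly only the off-diagonal part of the matrix is estimated; your write-up is at least explicit about restricting to pairs $i\neq j$.
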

\begin{proof}
Fix some $i, j \in N$, where $i \le j$.  
The value $SV_{ij}(N,v)$ based on \cite{Hausken:Mohr:2001} is equivalent to:
\[ SV_{ij}(N,v) \!=\! \sum_{S \subseteq N} \left(\frac{(|S| - 1)!(|N| - |S|)!}{|N|!} \left(v(S) - v(S \setminus \{i\}) - v(S \setminus \{j\}) + v(S \setminus \{i,j\})\right)\sum_{t=|S|}^{|N|} \frac{1}{t}\right). \]

Now, observe that for any $i,j \in S$, the expression $(|S| - 1)!(|N| - |S|)!/|N|!$ represents the probability that, in a random permutation of $|N|$ players, the first $|S|-1$ positions are occupied by players from $S \setminus \{i\}$, and player $i$ appears at position $|S|$.  
Therefore:
\[ SV_{ij}(N,v) = \frac{1}{|N|!} \sum_{\pi \in \Omega(N)} \left((v(S^{\pi}_{i} \cup \{i\}) - v(S^{\pi}_{i}) - v(S^{\pi}_{i} \cup \{i\} \setminus \{j\}) + v(S^{\pi}_{i} \setminus \{j\}))\sum_{t=|S^{\pi}_{i}|+1}^{|N|} \frac{1}{t}\right). \]
It follows that:
\[ SV_{ij}(N,v) = \mathbb{E}_{\pi \sim \Omega(N)} \left((v(S^{\pi}_i \cup \{i\}) - v(S^{\pi}_i) - v(S^{\pi}_i \cup \{i\} \setminus \{j\}) + v(S^{\pi}_i \setminus \{j\}))\sum_{t=|S^{\pi}_i|+1}^{|N|} \frac{1}{t}\right), \]  
where $\mathbb{E}_{\pi \sim \Omega(N)}$ denotes the expected value over $\pi$, which is sampled uniformly from the set of all permutations of $N$ elements.  

It remains to show that, based on the derived formula, a single permutation $\pi$ contributes to and only to those $\varphi_{ij}$ such that $i \le j$, $S^{\pi}_i \cup \{i\}$ is a prefix of $\pi$, and $j \in S^{\pi}_i$, as it is updated in the algorithm. Indeed, this is the case, since when $j \notin S$, we have  
\[ v(S^{\pi}_i \cup \{i\}) - v(S^{\pi}_i) - v(S^{\pi}_i \cup \{i\} \setminus \{j\}) + v(S^{\pi}_i \setminus \{j\}) = 0. \]  
This concludes the proof.
\end{proof}

\section{Supplementary Materials for \Cref{section:positive}}


\subsection{Proof of \Cref{theorem:matching}}


\begin{lemma}
\label{lemma:positive_lowerquota}
The allocation $(X_i)_{i \in N}$ returned by \Cref{alg:matching} satisfies Lower Quota and Upper Quota, i.e., it holds $\lfloor SV_i(N,v_{\mathcal{S}}) \rfloor \le |X_i| \le \lceil SV_i(N,v_{\mathcal{S}}) \rceil$ for every player $i \in N$.
\end{lemma}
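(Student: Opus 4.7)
The plan is to track, throughout the two stages of \Cref{alg:matching}, the number of matched copies of each player and to show that this count never leaves the interval $[\lfloor SV_i(N,v_{\mathcal{S}})\rfloor, \lceil SV_i(N,v_{\mathcal{S}})\rceil]$, which is exactly what Lower Quota and Upper Quota demand (since $|X_i|$ equals the number of matched copies of player $i$ at the end of the algorithm).

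First I would establish the invariant at the end of Step~1 (line 4 of \Cref{alg:matching}). Each player $i$ was duplicated $\lfloor SV_i(N,v_{\mathcal{S}})\rfloor$ times, and the Hall-condition argument already given before the algorithm description shows that for every $S\subseteq N$ the neighborhood of the player-nodes from $S$ contains at least $\sum_{i\in S}\lfloor SV_i(N,v_{\mathcal{S}})\rfloor$ objects; hence a perfect matching of player-copies into objects exists and is found by Hopcroft--Karp. At this point every player $i$ has exactly $\lfloor SV_i(N,v_{\mathcal{S}})\rfloor$ matched copies, so both quotas hold trivially.

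Next I would appeal to the standard structural fact about augmenting paths in bipartite matching: along any augmenting path, the two endpoints pass from unmatched to matched (gaining one matched edge), while every internal vertex keeps exactly the same number of incident matched edges, merely swapping one matched edge for an adjacent one. In each iteration of the loop in Step~2, a single new copy of some player $i$ (the starting endpoint) is introduced and, if an augmenting path to an unmatched object (the other endpoint) exists, augmented. Consequently, the number of matched copies of every player $j \neq i$ is unchanged during this iteration, whereas the number of matched copies of $i$ itself grows by exactly $1$ if the augmenting path exists and stays the same otherwise. Crucially, an extra copy is introduced for player $i$ only when $SV_i(N,v_{\mathcal{S}}) > \lfloor SV_i(N,v_{\mathcal{S}})\rfloor$, and at most once per player.

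Combining these observations gives the lemma. For Lower Quota, $|X_i|$ starts at $\lfloor SV_i(N,v_{\mathcal{S}})\rfloor$ after Step~1 and can only weakly increase, so $|X_i|\ge\lfloor SV_i(N,v_{\mathcal{S}})\rfloor$. For Upper Quota, when $SV_i(N,v_{\mathcal{S}})$ is an integer no additional copy of $i$ is ever added, so $|X_i|=\lfloor SV_i(N,v_{\mathcal{S}})\rfloor=\lceil SV_i(N,v_{\mathcal{S}})\rceil$; when it is fractional, at most one extra copy is added, yielding $|X_i|\le\lfloor SV_i(N,v_{\mathcal{S}})\rfloor+1=\lceil SV_i(N,v_{\mathcal{S}})\rceil$.

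The only point that requires a little care, and which I expect to be the main conceptual obstacle, is ruling out the possibility that an augmenting path triggered by adding a copy of some player $j$ could shift matched edges so that player $i\neq j$ ends up with more matched copies than before. The invariant that internal vertices on an augmenting path preserve their matched-edge count precisely rules this out; once this invariant is stated clearly, the rest of the argument is an immediate bookkeeping exercise.
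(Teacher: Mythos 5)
Your proposal is correct and follows essentially the same route as the paper: Hall's condition (via the bound $|\{o_j : S\cap S_j\neq\emptyset\}|\ge\sum_{i\in S}SV_i(N,v_{\mathcal{S}})$) guarantees the perfect matching of the $\lfloor SV_i\rfloor$ copies in Step~1, giving Lower Quota, while Upper Quota follows because each player ends up with at most $\lceil SV_i(N,v_{\mathcal{S}})\rceil$ copies in the graph. Your explicit remark that augmenting paths preserve the matched-degree of internal vertices is a useful elaboration of a step the paper leaves implicit, but it does not change the argument.
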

\begin{proof}
Clearly, the allocation satisfies Upper Quota as it defined by a matching in a graph, in which every player has $\lceil SV_i(N,v) \rceil$ player-nodes.

To prove Lower Quota, it is enough to prove that a perfect matching of player-nodes into objects in line 4 exists.
To this end, we use Hall's marriage theorem. 
For each coalition $S \subseteq N$, the number of objects owned (at least partially) by at least one player in $S$ equals:
\[ |\{o_j \!\in\! O : S \cap S_j \neq \emptyset\}| \geq \sum_{o_j \in O} \frac{|S \!\cap\! S_j|}{|S_j|} = \sum_{i \in S} SV_i(N,v). \]
Thus, for every coalition, the number of object-nodes connected to player-nodes is greater than or equal to the number of these player-nodes.
Hence, the marriage condition is satisfied.
\end{proof}

\begin{lemma}
\label{lemma:positive_efficient}
The allocation $(X_i)_{i \in N}$ returned by \Cref{alg:matching} is efficient, i.e., it is a perfect matching of objects into player-nodes and it holds $\sum_{i \in N} |X_i| = v_\mathcal{S}(N)$.
\end{lemma}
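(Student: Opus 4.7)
The plan is to show that every object is matched in the final allocation; this immediately yields $\sum_{i \in N}|X_i| = |O| = v_\mathcal{S}(N)$ and settles both claims of the lemma. By construction, at the end of step 6 each player $i$ has exactly $\lceil SV_i(N, v_\mathcal{S}) \rceil$ copies on the left-hand side: step 3 creates $\lfloor SV_i(N,v_\mathcal{S}) \rfloor$ copies, and step 6 adds one more whenever $SV_i(N,v_\mathcal{S})$ has a positive fractional part. Call this bipartite graph $G^{\star}$. The argument then splits into (i) showing $G^{\star}$ admits a matching saturating every object-node, and (ii) showing that the algorithm actually realizes such a matching.

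For (i) I would invoke Hall's theorem on the object side. Fix $T \subseteq O$ and let $U = \bigcup_{o_j \in T} S_j$ be the union of their owners. Each $o_j \in T$ satisfies $S_j \cap U \neq \emptyset$, so
\[
|T| \le |\{o_j \in O : S_j \cap U \neq \emptyset\}| = v_\mathcal{S}(N) - v_\mathcal{S}(N \setminus U).
\]
Because $(N, v_\mathcal{S})$ is positive and hence convex, the (fractional) Shapley value lies in the core, giving $v_\mathcal{S}(N \setminus U) \le \sum_{i \in N \setminus U} SV_i(N, v_\mathcal{S})$. Combined with Efficiency of the Shapley value, this yields
\[
|T| \le \sum_{i \in U} SV_i(N, v_\mathcal{S}) \le \sum_{i \in U} \lceil SV_i(N, v_\mathcal{S}) \rceil,
\]
which is exactly the number of player-node copies in $G^{\star}$ adjacent to $T$.

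For (ii) I would use the standard augmenting-path invariant. After step 4, \Cref{lemma:positive_lowerquota} (whose proof already guarantees that a perfect matching of the player-nodes exists via Hall) implies that the matching produced by Hopcroft--Karp saturates every player-node, so it is maximum in the graph at that point. Each iteration of step 6 introduces one fresh player-node $u$ and either augments along an augmenting path from $u$ (growing the matching by one) or leaves it unchanged; these are exactly the two possibilities for how the maximum matching size of a bipartite graph can change when a single vertex is added. Hence the invariant ``the current matching is maximum in the current graph'' is preserved, and the algorithm's final matching is a maximum matching of $G^{\star}$, whose size by part (i) equals $|O| = v_\mathcal{S}(N)$.

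The main obstacle is the Hall verification in part (i): without the core containment of the fractional Shapley value it is not at all obvious that rounding each $SV_i(N,v_\mathcal{S})$ up to $\lceil SV_i(N,v_\mathcal{S}) \rceil$ provides sufficient capacity on the player side. Everything else is textbook bipartite matching theory, and the same overall strategy will likely be reused when arguing that the algorithm's output coincides with the indivisible Shapley value.
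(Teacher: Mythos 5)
Your proof is correct, and while it follows the same two-part skeleton as the paper's proof (first, existence of an object-saturating matching in the final graph via Hall's condition; second, an argument that the algorithm actually attains it), both halves are executed differently. For the Hall condition, the paper computes directly: for a set of objects $M$ with owner set $S$, it writes $\sum_{i \in S} \lceil SV_i \rceil \ge \sum_{i \in S} SV_i = \sum_{S_j \cap S \neq \emptyset} |S \cap S_j|/|S_j| \ge |\{S_j : S_j \subseteq S\}| \ge |M|$, using only the explicit formula $SV_i = \sum_{S_j \ni i} 1/|S_j|$; your route through core containment of the fractional Shapley value plus Efficiency reaches the same bound and is equally valid, though it leans on convexity where the paper needs nothing beyond the closed form. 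The more substantive difference is in the second half: the paper argues by contradiction, extending the algorithm's matching $E'$ to a perfect matching $E^*$, locating the first player-node matched in $E^*$ but not in $E'$, and exhibiting an augmenting path in a carefully pruned auxiliary graph $G'$. Your invariant --- "the current matching is maximum in the current graph," preserved because an unmatched new vertex can only be an endpoint of an augmenting path, so failing to find one from the fresh copy certifies that maximality survives --- is cleaner and avoids the auxiliary-graph construction entirely; combined with the observation that the post--Hopcroft--Karp matching saturates all player-nodes and is hence maximum, it yields the conclusion with less bookkeeping. Both approaches are sound; yours buys brevity and reuses a textbook fact (Berge), while the paper's is self-contained at the cost of a more delicate exchange argument.
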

\begin{proof}
We begin by proving that there exists a perfect matching of objects into player-nodes in the final graph, i.e., in a graph in which every player $i$ has $\lceil SV_i(N,v)\rceil$ player-nodes.

Let $M \subseteq \mathcal{S}$ be a set of objects. Let $S$ be the set of players that owns at least partially at least one of the objects from $M$.
The number of player-node of players from $S$ equals:
 \[ \sum_{i \in S} \lceil SV_i(N,v) \rceil \ge \sum_{i \in S} SV_i(N,v) = \sum_{S_j \cap S \neq \emptyset} \frac{|S \cap S_j|}{|S_j|} \ge \sum_{S_j \subseteq S} \frac{|S \cap S_j|}{|S_j|} = |\{S_j : S_j \subseteq S\}| \ge |M|.\]
Thus, for every group of objects, the number of player-nodes connected to the corresponding object-nodes is greater than or equal to the number of these objects.
Hence, Hall's marriage theorem implies the perfect matching exists.

Assume now that \Cref{alg:matching} finds a matching $E'$ that is not perfect.
We know that a perfect matching can be obtained by finding augmenting paths and extending the current matching with them. 
Through such a process all nodes which were matched in the original matching are still matched in the final perfect matching.
Consider a perfect matching $E^*$ obtained through this process from $E'$.
Take a first player $i$ in $\pi$ which has a player-node $v_i$ which is matched in $E^*$, but is not matched in $E'$.
This player-node $v_i$ must have been added in the loop in lines 6--11 of the algorithm, but no augmenting path with this new node was found.

We argue now that it is not possible as evidenced by matching $E^*$.
Consider a graph $G'$ obtained from the final graph by deleting (1) all additional nodes added after player-node $v_i$; (2) all remaining additional nodes unmatched in $E^*$.
Now, $E^*$ restricted to nodes from graph $G'$ is a perfect matching of player-nodes into objects.
In turn, $E'$ restricted to nodes from graph $G'$ is not, as the only unmatched node is $v_i$.
Hence, there exists an augmenting path that contains node $v_i$. 
This path is also an augmenting path in the graph obtained after adding node $v_i$ in the algorithm (this graph differs from $G'$ only by containing additional unmatched player-nodes) which yields a contradiction.
\end{proof}

\begin{algorithm}[t!]
   \caption{Indivisible Shapley Value for Positive Games}
   \label{alg:matching}
   
\begin{algorithmic}[1]
\REQUIRE Objects $O$, players $N$, and a list of object owners $\mathcal{S}$
\ENSURE An allocation of objects $X = (X_1, \dots, X_n)$
   \STATE create a bipartite graph $G_{\mathcal{S}} = (N,O,E)$
   \STATE compute the Shapley value in game $(N, v_\mathcal{S})$
   \STATE duplicate player-node $i$ $\lfloor SV_i(N,v_\mathcal{S}) \rfloor$ times
   \STATE find a perfect matching of player-nodes into objects (using the Hopcroft-Karp algorithm)
   \STATE $\pi \leftarrow$ permutation of players in decreasing order according to $SV_i(N,v_{\mathcal{S}}) - \lfloor SV_i(N,v_{\mathcal{S}}) \rfloor$
   \FOR{$i$ in order $\pi$}
      \IF{$SV_i(N,v_{\mathcal{S}}) > \lfloor SV_i(N,v_{\mathcal{S}}) \rfloor$}
         \STATE add one copy of player-node $i$
         \STATE find an augmenting path with the new node, if exists
	  \ENDIF
   \ENDFOR
   \STATE {\bfseries return} $(\{o_j \in O : \text{ edge } \{i,o_j\} \text{ is selected}\})_{i \in N}$
\end{algorithmic}
\end{algorithm}

\begin{lemma}
\label{lemma:positive_full_core}
For every indivisible payoff vector $x$ that belongs to the core of game $(N,v_{\mathcal{S}})$ there exists an allocation $(X_i)_{i \in N}$ such that $(|X_i|)_{i \in N} = x$.
\end{lemma}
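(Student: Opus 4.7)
The plan is to reduce the existence of an allocation to finding a perfect matching in an auxiliary bipartite graph, and then to obtain that matching by verifying Hall's condition directly from the core inequalities of $(N, v_{\mathcal{S}})$.

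First, I would construct the bipartite graph $G'$ obtained from $G_{\mathcal{S}}$ by replacing each player-node $i$ with $x_i$ copies, all sharing the neighborhood of $i$. A matching in $G'$ that saturates every object-node immediately yields an allocation $(X_i)_{i \in N}$ with $|X_i| = x_i$: let $X_i$ be the set of objects matched to the $x_i$ copies of $i$. Since every edge of $G'$ inherits an incidence from $G_{\mathcal{S}}$, the condition $o_j \in X_i \Rightarrow i \in S_j$ is automatic, so the sequence $(X_1,\dots,X_n)$ is a valid allocation.

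Next, I would check the counting balance. Since every $S_j \subseteq N$, we have $v_{\mathcal{S}}(N) = k = |O|$, and Efficiency (which holds for any point of the core) gives $\sum_{i \in N} x_i = v_{\mathcal{S}}(N) = |O|$. Thus the two sides of $G'$ have equal cardinality, and it suffices to establish Hall's condition on the object side.

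The key step is the Hall verification. Fix an arbitrary $T \subseteq O$ and let $S = \bigcup_{o_j \in T} S_j$ be the set of players adjacent to $T$ in $G'$. By the definition of $S$, every owner of every $o_j \in T$ lies in $S$, so $S_j \subseteq S$ for every $o_j \in T$. Hence $|T| \le |\{o_j : S_j \subseteq S\}| = v_{\mathcal{S}}(S)$, and the core inequality for $S$ gives
\[
\sum_{i \in S} x_i \;\ge\; v_{\mathcal{S}}(S) \;\ge\; |T|.
\]
The left-hand side is exactly the number of player-node copies in $G'$ adjacent to $T$, so Hall's condition holds. Applying Hall's theorem (on the object side) produces the desired perfect matching, from which the allocation is read off as above.

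I do not foresee a genuine obstacle: the only subtlety is to invoke Hall's theorem from the object side rather than the player side, and to observe that the neighborhood in $G'$ of any set of objects $T$ is precisely the coalition $S$ whose core inequality is needed, with $|T|$ bounded above by $v_{\mathcal{S}}(S)$ because $T$ consists only of objects whose owners are entirely contained in $S$.
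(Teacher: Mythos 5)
Your proposal is correct and follows essentially the same route as the paper: duplicate each player-node $x_i$ times, note that Efficiency equalizes the two sides, and verify Hall's condition for an arbitrary set of objects $T$ via the chain $\sum_{i \in S} x_i \ge v_{\mathcal{S}}(S) \ge |T|$ where $S$ is the neighborhood of $T$. The paper's proof is the same argument, phrased in terms of a subset $M \subseteq \mathcal{S}$ of groups rather than a subset of object-nodes.
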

\begin{proof}
Let us create a bipartite graph from $G_{\mathcal{S}}$ by making $x_i$ copies of the player-node for each player $i$.  
Clearly, the number of nodes on both sides is the same and equals $|\mathcal{S}|$.  
It suffices to prove that this graph has a perfect matching.

Let $M \subseteq \mathcal{S}$ be a subset of groups, and let $S$ be the set of players that belong to at least one of these groups.  
The number of player-nodes corresponding to players in $S$ is:
\[ \sum_{i \in S} x_i \geq v(S) = |\{S_j : S_j \subseteq S\}| \geq |M|. \]
Thus, every group of object-nodes is connected to at least the same number of player-nodes.
By Hall's marriage theorem, a perfect matching exists.
\end{proof}

\begin{proof}[Proof of \Cref{theorem:matching}]
Let $(X_i)_{i \in N}$ be an allocation returned by \Cref{alg:matching} and $x = (|X_i|)_{i \in N}$ be the corresponding payoff vector.
We already proved in \Cref{lemma:positive_lowerquota} and \Cref{lemma:positive_efficient} that the payoff vector $x$ satisfies Efficiency, Lower Quota and Upper Quota for game $(N, v_{\mathcal{S}})$.

Also, it is easy to see that $x$ belongs to the core of the game $(N, v_\mathcal{S})$: since every object is assigned to one of its owners, for every coalition $S \subseteq N$ we have: 
\[ \sum_{i \in S} x_i \ge |\{S_j \in \mathcal{S} : S_j \subseteq S\}| = v_{\mathcal{S}}(S). \]

We also proved in \Cref{lemma:lex} that among payoff vectors satisfying the above conditions (satisfying Efficiency, Lower Quota, Upper Quota and belonging to the core) the indivisible Shapley value is lexicographically maximal with respect to permutation $\pi$, where $\pi$ is defined as in \Cref{alg:matching}.
Hence, it remains to argue that \Cref{alg:matching} also finds a payoff vector which is lexicographically maximal.

We will prove this by contradiction: assume it is not true, i.e., there exists a payoff vector $y$ in the core that satisfies Efficiency, Lower Quota and Upper Quota and that is lexicographically maximal.
From \Cref{lemma:positive_full_core} we know that there exists an allocation that corresponds to this payoff vector.
Now, consider first player $i$ (with respect to $\pi$) on which both payoff vector differ. 
Since both vectors satisfy Lower Quota and Upper Quota we get that $x_i = \lfloor SV_i(N,v) \rfloor$ and $y_i = \lceil SV_i(N,v) \rceil$ and $x_i < y_i$.
Now, consider the graph $G_i$ obtained by adding player-node for player $i$ in the loop in lines 6--10 of \Cref{alg:matching}.
Since there exists an allocation that corresponds to $y$ we get that there exists a larger matching in graph $G_i$.
This implies that there exists an augmenting path.
This augmenting path must contain the new copy of the player-node $i$ as there was no augmenting path before its addition---if it was, then $y$ is not lexicographically maximal.
Hence, \Cref{alg:matching} would find this augmenting path and assign $\lceil SV_i(N,v) \rceil$ to player $i$ which is a contradiction.

It remains to analyze the complexity of \Cref{alg:matching}.
Creating a bipartite graph $G_{\mathcal{S}}$ and computing the Shapley value (lines 1--2) takes time $O(|N||\mathcal{S}|)$.
The graph obtained from the duplication of the nodes has $O(|\mathcal{S}|+|N|)$ nodes (the sum of the Shapley values equal $v(N) = |\mathcal{S}|$) and $O(|\mathcal{S}|(|\mathcal{S}|+|N|))$ edges.
Hence, line 3 takes time $O(|\mathcal{S}|(|\mathcal{S}|+|N|))$.
The Hopcroft-Karp algorithm takes time $O(|\mathcal{S}|(|\mathcal{S}|+|N|)^{3/2})$ (line 4).
Then, the main loop (lines 6--10) has at most $|N|$ iterations and finding the augmenting path takes time $O(|\mathcal{S}|(|\mathcal{S}|+|N|))$.
Hence, the full cost of the algorithm is $O(|\mathcal{S}|(|\mathcal{S}|+|N|)^{3/2} + |\mathcal{S}| |N| (|\mathcal{S}|+|N|))$.
If $|N| \le |\mathcal{S}|$, this simplifies to $O(|\mathcal{S}|^{5/2})$.
\end{proof}

\begin{proposition}
\label{proposition:positive_polynomial}
The indivisible Shapley value of a positive game can be calculated in polynomial time with respect to the number of non-zero Harsanyi dividends and players.
\end{proposition}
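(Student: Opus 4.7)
The plan is to adapt \Cref{alg:matching} so that it operates on the list of non-zero Harsanyi dividends $(S_1,\Delta_1),\dots,(S_k,\Delta_k)$ rather than on individual objects, by replacing the bipartite matching with max-flow on a capacitated network.

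First, each Shapley value is computed in $O(kn)$ arithmetic operations via $SV_i(N,v) = \sum_{j : i \in S_j} \Delta_j/|S_j|$. Next, I would construct a capacitated network with a source $s$, player-nodes (each with incoming capacity $x_i$ from $s$), dividend-nodes (each with outgoing capacity $\Delta_j$ to the sink $t$), and infinite-capacity player-to-dividend edges $\{i,S_j\}$ whenever $i \in S_j$. A min-cut calculation, combined with Efficiency, shows that a maximum $s$-$t$ flow of value $\sum_i x_i$ exists if and only if $\sum_{i \in T} x_i \le v(N) - v(N \setminus T)$ for every $T \subseteq N$; at termination, when $\sum_i x_i = v(N)$, this is equivalent, by complementation, to the core condition $\sum_{i \in P} x_i \ge v(P)$ for every $P \subseteq N$.

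Following the two-phase structure of \Cref{alg:matching}, I would (i) set each source capacity to $\lfloor SV_i(N,v) \rfloor$ and run a strongly polynomial max-flow algorithm---because $SV$ lies in the core of positive games, the max-flow saturates all source edges, and since source capacities are integer, the flow on each source edge is automatically $\lfloor SV_i(N,v) \rfloor$; then (ii) iterate players in decreasing order of the remainder $SV_i(N,v) - \lfloor SV_i(N,v) \rfloor$, each time increasing the chosen player's source capacity by $1$ and attempting a one-unit augmentation along an $s$-$t$ path in the residual graph. The augmentation succeeds precisely when adding $1$ to $x_i$ leaves $x$ satisfying the flow-feasibility (equivalently, core) condition, and it preserves integer flows on all source edges. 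Correctness---that the returned $x$ equals the indivisible Shapley value---then follows by the same chain of reasoning as in the proof of \Cref{theorem:matching}: $x$ satisfies Efficiency, both Quotas, lies in the core, and is lexicographically maximal in the remainder ordering among such vectors.

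For complexity, the network has $O(k+n)$ nodes and $O(kn)$ edges, phase (i) runs in strongly polynomial time in $k+n$, and phase (ii) performs at most $n$ BFS-based augmentations at $O(kn)$ time each, giving total running time polynomial in $k$ and $n$. The main obstacle I anticipate is the max-flow/min-cut translation of the core condition in the presence of integer source capacities but fractional sink capacities $\Delta_j$; this is resolved by the min-cut calculation sketched above, and integrality on the source side comes for free because saturating an integer source edge forces the flow on that edge to equal that integer, regardless of the rational flows downstream at the dividend nodes.
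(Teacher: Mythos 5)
Your approach is sound in outline but takes a genuinely different route from the paper. The paper does not build a flow network at all: it peels off $\lfloor \Delta_v(S)/|S|\rfloor$ units per member of each coalition $S$ with a non-zero dividend, places each such $S$ into a list $\mathcal{S}$ with multiplicity $\Delta_v(S) - |S|\cdot\lfloor\Delta_v(S)/|S|\rfloor$ (so $|\mathcal{S}|\le kn$ for $k$ non-zero dividends), and then simply runs \Cref{alg:matching} on the resulting unit-object instance, inheriting correctness wholesale from \Cref{theorem:matching}. Your capacitated network with one node per dividend avoids this object-splitting and is in one respect more robust (the paper's multiplicities only make sense when the dividends are integers), and your min-cut translation of the core condition, $\sum_{i\in T}x_i \le v(N)-v(N\setminus T) = \sum_{S_j\cap T\neq\emptyset}\Delta_j$, is correct. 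The price is that you must re-derive the exchange arguments behind \Cref{lemma:positive_efficient} and the lexicographic-maximality step of \Cref{theorem:matching} in the flow setting rather than citing them; this does go through (a lexicographically larger core vector $y$ dominates the relevant capacity vector componentwise and hence certifies source-saturability via your min-cut condition), but it is the actual content of the proof and should be spelled out rather than deferred.

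One concrete inaccuracy: with fractional sink capacities $\Delta_j$, a single augmenting path in the residual graph carries only the (possibly fractional) bottleneck, so ``a one-unit augmentation along an $s$-$t$ path'' is not a single BFS step. You may need many augmentations to move a full unit, or the total achievable increase may be strictly between $0$ and $1$, in which case you must revert entirely to keep the source edge integrally saturated. This does not break polynomiality---recompute a max flow per candidate player with a strongly polynomial algorithm and compare its value to $\sum_i x_i+1$, or use Edmonds--Karp, whose augmentation count is capacity-independent---but your stated bound of $n$ BFS augmentations at $O(kn)$ time each is not justified as written.
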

\begin{proof}[Proof of \Cref{proposition:positive_polynomial}]
Consider an arbitrary positive game with $n$ players and $d$ non-zero dividends.
If $\Delta_v(S) \ge |S|$ for some coalition $S$, then we assign $\lfloor \Delta_v(S)/|S| \rfloor$ to each player $i \in S$. 
Then, we create a list $\mathcal{S}$ in which every coalition $S$ appears $\Delta_v(S) - |S| \cdot \lfloor \Delta_v(S)/|S| \rfloor$ times.
The size of the list does not exceed $d \cdot n$.
Now, by running \Cref{alg:matching} we get an allocation $(X_i)_{i \in N}$ that corresponds to the indivisible Shapley value for game $v_\mathcal{S}$. 
By adding to the payoff of each player $i$ value $|X_i|$ we get the indivisible Shapley value for the original game.
\end{proof}

\section{Supplementary Materials for \Cref{section:case_studies}}

\subsection{Additional Experiments for \Cref{section:image_classification}}
\label{experiment:statistics}




We conducted the same experiments for random 100 images from the dataset with their ground truth labels. Each image was divided using skimage.segmentation.slic with the parameter n\_segments equal to 250. We approximated each image's Shapley value matrix with 5,000 samples. Similarly, we selected $v(N) = 10$ regions. We compared our method with $\alpha=0.5$ and baseline approach that selects regions with the highest Shapley value.

On average, the sum of Shapley values of features selected by our method was lower only by $26.7 \pm 14.7 \%$ compared to the the baseline method, while their sum of synergies was lower by $89.3 \pm 18.8\%$. Thus, our method sacrifices only a small amount of Shapley value while achieving a substantial reduction in synergy among the selected features.

Next, we applied Euclidean k-means clustering (100 runs for each $k \in [2, 9]$) and selected the solution with the highest Silhouette Score. The average number of clusters for our method was $3.228 \pm 1.061$ compared to $2.703 \pm 0.960$ for the baseline. The average highest Silhouette Score for our method was $0.505  \pm 0.077$ compared to $0.583  \pm 0.105$ for the baseline. 

We assume that the Silhouette Scores for both methods are normally distributed, as the p-values of the Shapiro--Wilk tests are greater than 0.25 for both. Therefore, we cannot reject the null hypothesis that the data come from a normal distribution. We then performed a one-sided Welch's $t$-test to compare the mean Silhouette Scores, with the alternative hypothesis that our method yields lower scores than the baseline. The resulting p-value of $p < 10^{-8}$ indicates a statistically significant difference showing that our method has a lower average Silhouette Score than the baseline method.

\subsection{Auxiliary Visualization for \Cref{section:image_classification}}

\label{visualization:zoo}
In \Cref{figure:zoo} we showcase selection of key regions in an image classification for different $\alpha$ parameter (Used in \cref{alg:matrix_approx}). We picked four extra images with labels \textit{ice\_bear}, \textit{elephant}, \textit{tiger}, \textit{tiger\_cat}. 

\begin{figure}[H]

\begin{tikzpicture}[x=1cm, y=1cm]
\node at (-13.0, 3.0) {\small Shapley Value};
\node at (-13.0, 0.0) {\small $\alpha = 0.0$};
\node at (-13.0, -3.0) {\small $\alpha = 0.2$};
\node at (-13.0, -6.0) {\small $\alpha = 0.4$};
\node at (-13.0, -9.0) {\small $\alpha = 0.6$};
\node at (-13.0, -12.0) {\small $\alpha = 0.8$};
\node at (-13.0, -15.0) {\small $\alpha = 1.0$};

\begin{scope}[shift={(-4,0)}]
\node at (-6.0, 3.0) {{\includegraphics[width=0.2\textwidth]{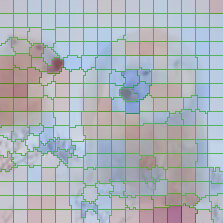}}};
\node at (-6.0, 0.0) {{\includegraphics[width=0.2\textwidth]{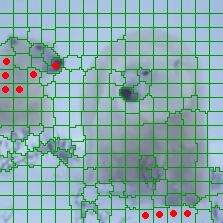}}};
\node at (-6.0, -3.0) {{\includegraphics[width=0.2\textwidth]{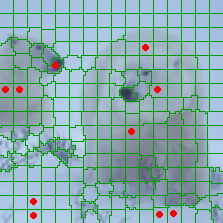}}};
\node at (-6.0, -6.0) {{\includegraphics[width=0.2\textwidth]{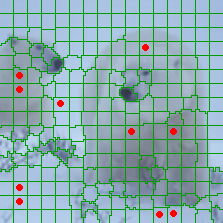}}};
\node at (-6.0, -9.0) {{\includegraphics[width=0.2\textwidth]{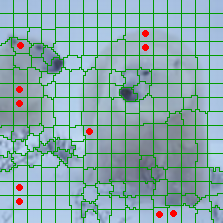}}};
\node at (-6.0, -12.0) {{\includegraphics[width=0.2\textwidth]{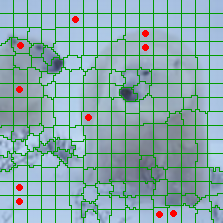}}};
\node at (-6.0, -15.0) {{\includegraphics[width=0.2\textwidth]{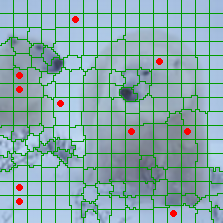}}};

\node at (-3.0, 3.0) {{\includegraphics[width=0.2\textwidth]{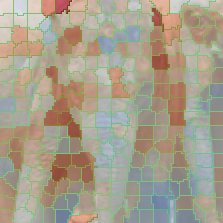}}};
\node at (-3.0, 0.0) {{\includegraphics[width=0.2\textwidth]{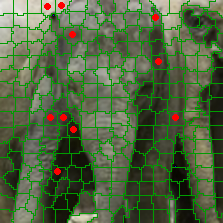}}};
\node at (-3.0, -3.0) {{\includegraphics[width=0.2\textwidth]{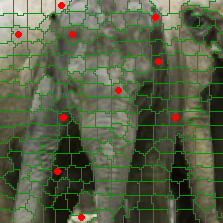}}};
\node at (-3.0, -6.0) {{\includegraphics[width=0.2\textwidth]{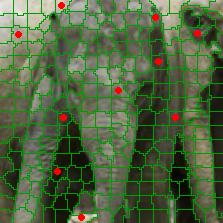}}};
\node at (-3.0, -9.0) {{\includegraphics[width=0.2\textwidth]{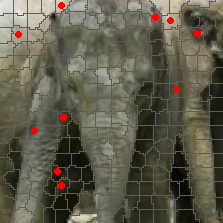}}};
\node at (-3.0, -12.0) {{\includegraphics[width=0.2\textwidth]{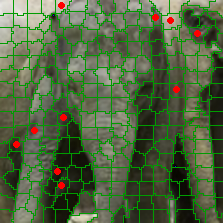}}};
\node at (-3.0, -15.0) {{\includegraphics[width=0.2\textwidth]{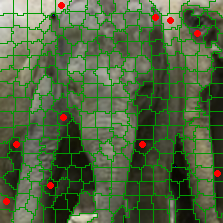}}};

\node at (-0.0, 3.0) {{\includegraphics[width=0.2\textwidth]{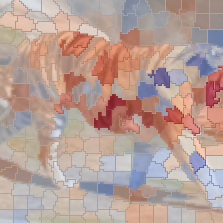}}};
\node at (-0.0, 0.0) {{\includegraphics[width=0.2\textwidth]{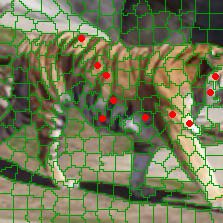}}};
\node at (-0.0, -3.0) {{\includegraphics[width=0.2\textwidth]{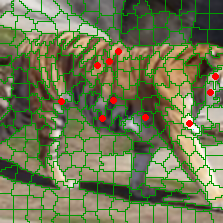}}};
\node at (-0.0, -6.0) {{\includegraphics[width=0.2\textwidth]{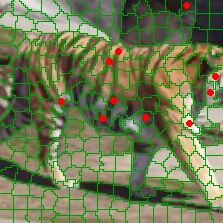}}};
\node at (-0.0, -9.0) {{\includegraphics[width=0.2\textwidth]{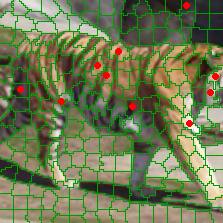}}};
\node at (-0.0, -12.0) {{\includegraphics[width=0.2\textwidth]{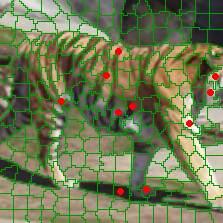}}};
\node at (-0.0, -15.0) {{\includegraphics[width=0.2\textwidth]{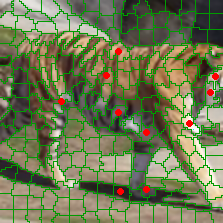}}};

\node at (3.0, 3.0) {{\includegraphics[width=0.2\textwidth]{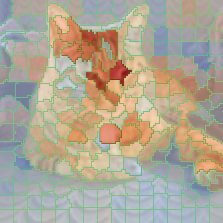}}};
\node at (3.0, 0.0) {{\includegraphics[width=0.2\textwidth]{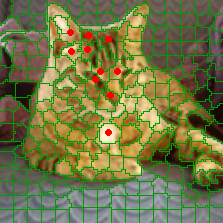}}};
\node at (3.0, -3.0) {{\includegraphics[width=0.2\textwidth]{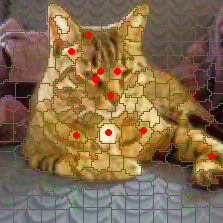}}};
\node at (3.0, -6.0) {{\includegraphics[width=0.2\textwidth]{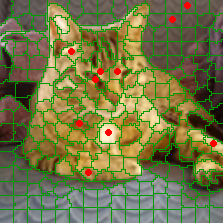}}};
\node at (3.0, -9.0) {{\includegraphics[width=0.2\textwidth]{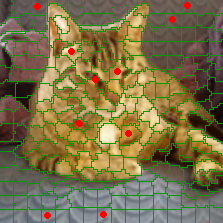}}};
\node at (3.0, -12.0) {{\includegraphics[width=0.2\textwidth]{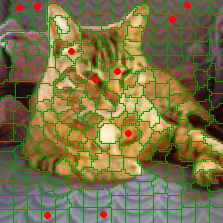}}};
\node at (3.0, -15.0) {{\includegraphics[width=0.2\textwidth]{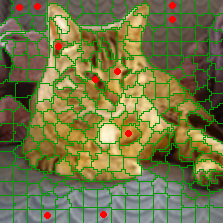}}};

\end{scope}

\end{tikzpicture}
\caption{Comparison of our method for different values of parameter $\alpha$  on four pictures from MiniImageNet. Red dots indicate key regions chosen by our method. Color scale on figures with Shapley Value goes from red for regions with highest value to blue for regions with lowest value. Scale is centered at 0.}
\label{figure:zoo}
\end{figure}

\end{document}